\newtheorem{theorem}{Theorem}
\newtheorem{lemma}[theorem]{Lemma}
\newtheorem{corollary}[theorem]{Corollary}
\theoremstyle{definition}
\begin{document}

\title{The Quadratic Minimum Spanning Tree Problem and its Variations}

\author{\sc{Ante \'Custi\'c}\thanks{{\tt acustic@sfu.ca}. Department of Mathematics, Simon Fraser University Surrey,  250-13450 102nd AV, Surrey, British Columbia, V3T 0A3, Canada}
\and
\sc{Ruonan Zhang}\thanks{{\tt ruonan.zhang@xjtlu.edu.cn}. Department of Mathematical Sciences, Xi'an Jiaotong-Liverpool University, 111 Ren'ai Road, Suzhou, Jiangsu, 215123, China}
\and
\sc{Abraham P. Punnen}\thanks{{\tt apunnen@sfu.ca}. Department of Mathematics, Simon Fraser University Surrey,  250-13450 102nd AV, Surrey, British Columbia, V3T 0A3, Canada}}

\maketitle

\begin{abstract}
The quadratic minimum spanning tree problem and its variations such as the quadratic bottleneck spanning tree problem, the minimum spanning tree problem with conflict pair constraints, and the bottleneck spanning tree problem with conflict pair constraints are useful in modeling various real life applications. All these problems are known to be NP-hard. In this paper, we investigate these problems to obtain additional insights into the  structure of the problems and to identify possible demarcation between easy and hard special cases. New polynomially solvable cases have been identified, as well as NP-hard instances on very simple graphs. As a byproduct, we have a recursive formula for counting the number of spanning trees on a $(k,n)$-accordion and a characterization of matroids in the context of a quadratic objective function.
\medskip

\noindent\emph{Keywords:} Quadratic spanning tree; complexity; tree enumeration; sparse graphs; row graded matrix; matroids.
\end{abstract}

\section{Introduction}\label{secIntro}

Let $G=(V,E)$ be an undirected graph with $|E|=m$. Costs $c_e$ and $q(f,g)$ are given for each
edge $e\in E$ and each pair of edges $f,g\in E,\ f\neq g$, respectively. Then the
\textsl{quadratic minimum spanning tree problem (QMST)} is
formulated as follows:

\begin{tabbing}
\hspace{3cm}
\=xxxx\=xxxxx\=xxxxx\=xxxxx\=xxxxx\=xxxx\=xxxx\kill
\> Minimize $z(T)=\sum\limits_{e\in T}\sum\limits_{\substack{f\in T\\f\neq e}}q(e,f)+\sum\limits_{e\in T}c_e$\\
\> Subject to\\
\>\>\> $T\in \mathcal{F}$,
\end{tabbing}
where $\mathcal{F}$ is the family of all spanning trees of $G$. The associated \textsl{cost matrix} $Q_{m\times m}$ have its ($i,j$)-th entry as $q(i,j)$ when $i\neq j$, and as $c_i$ when $i=j$.

The QMST can be viewed as a generalization of many well known optimization problems such as the travelling salesman
problem, the quadratic assignment problem, the maximum clique problem etc., and it can be used in modeling various real life application areas such as telecommunication, transportation, irrigation energy distribution, and so on. The problem was introduced by Assad and Xu \cite{Assad}, along with its special case - \textsl{the adjacent-only quadratic minimum spanning tree problem (AQMST)}, in which $q(e,f)=0$ if $e$ and $f$ are not adjacent. The strong NP-hardness of both the QMST and AQMST was proved in~\cite{Assad} along with ideas for solving the problem using exact and heuristic algorithms. The broad applications base and inherent complexity of the QMST makes it an interesting topic for further research. Most of the works on QMST have been focussed on heuristic algorithms ~\cite{x2,x8,x3,Oncan,x1,x6,Soak,Soak1,Zhou}. \'Custi\'c and Punnen~\cite{c1} provided a characterization QMST instances that can be solved as a minimum spanning tree problem. Exact algorithm for AQMST and QMST was studied by Pereira, Gendreau, and Cunha~\cite{Assad,x6,x9,xx9}. A special case of QMST with one quadratic term was studied by Buchheim  and Klein~\cite{x5}, A. Fischer and F. Fischer~\cite{x4}.

There are many other problems which have been studied in the literature, that are closely related to the QMST in terms of formulation and applications. We list below some of these variations of QMST that we investigate in this paper.

\textsl{The minimum spanning tree problem with conflict pairs (MSTC)}~\cite{Darmann, da, Zh}: Given a graph $G$ with edge costs
$c_e$ and a set $S\subseteq \{\{e,f\}\subseteq E\colon  e\neq f\}$, the MSTC is to find a
spanning tree of $G$ such that the tree
cost is minimized and for each edge pair $\{e,f\}$ in $S$, at most
one of them is included in the tree. The feasibility problem of
the MSTC, denoted by FSTC, is the problem of finding a feasible solution of the MSTC, regardless
of the costs. Given an FSTC instance, we construct a QMST on the same graph with
\[
q(e,f)=
\begin{cases}
	1 & \mbox{ if } \{e,f\}\in S \\
	0 & \mbox{ otherwise}.
\end{cases}
\]
Then the FSTC instance is feasible if and only if the above QMST has the optimal objective function value 0. Therefore, the FSTC reduces to the QMST.

\textsl{The quadratic bottleneck spanning tree problem (QBST)}: By replacing the objective function of the QMST with $\max\{q(e,f):e,f\in T\}$, we obtain the QBST. The problem is introduced in
\cite{Zhang} and shown to be NP-hard even on a bipartite graph with 0-1 $q(e, f)$ values. FQBST, the feasibility version of the QBST, is described as \textsl{``Given a value $\mu$, does there
exist a spanning tree $T$ of $G$ such that $\max\limits_{e,f\in
T}q(e,f)\leq \mu$?"}. As the FQBST is equivalent to the FSTC, \cite{Zhang} develops heuristic algorithms for the QBST using MSTC heuristics as subroutines.

\textsl{The bottleneck spanning tree problem with conflict pairs (BSTC)}: Similar to the relation between the QBST and the QMST, the BSTC is defined by substituting the ``min-sum" objective function in the MSTC with a ``min-max" objective function.

Furthermore, we define the ``adjacent-only counterparts" for the above problems: AQBST, in which $q(e,f)=0$ if $e$ and $f$ are not adjacent in the graph $G$; MSTAC, FSTAC, BSTAC, where the edges in the conflict pairs are all restricted to be adjacent.
\medskip

Even though the above problems are all proved to be NP-hard in general, exploring nicely solvable special cases  provide additional insights into the structure of these problems and opportunities for developing effective heuristics~\cite{m1,m2}. The primary research question we focus in this paper is: \textsl{To what extend the QMST and its variations would retain its NP-hardness status, or become amenable for polynomial time  solvability?} We consider restricting the structure of the graph $G$ and that of the cost matrix $Q$ to identify possible demarkation between easy and hard instances.

The rest of the paper is organized as follows: Section~\ref{secGraphs} introduces the sparse graphs that we are investigating. These include fans, fan-stars, ladders, wheels and their generalizations, $(k,n)$-ladders and $(k,n)$-accordions. A recursive formula is derived to count the number of spanning trees of a $(k,n)$-accordion, which generalizes the well known sparring tree counting formulas for fans and ladders. In Section~\ref{secComplexity} we study the complexity of  QMST and its variations on these sparse graphs. It is shown that the problems are NP-hard, except in the case of AQBST, MSTAC, BSTAC and AQBST on $(k,n)$-ladders and for these cases, we provide  $O(kn)$ time algorithms. The problems on a general graph $G$ but with specially structured cost matrices  are discussed in Section~\ref{secGraded}. In particular, we show that when $Q$ is a permuted doubly graded matrix, QMST and QBST are solvable in polynomial time.
In this case the optimal solution value attains the Assad-Xu lower bound \cite{Assad}. This result is extended to the case of  matroid bases and it provides a new characterization of matroids.

We use the notations $V(G)$ and $E(G)$ to denote, respectively, the node and edge sets of a graph $G$.
The quadratic costs $q(e_i, e_j)$ for the edge-pair $(e_i,e_j)$ is sometimes denoted by $q(i,j)$ for simplicity.

\section{The $(k,n)$-accordion and the number of spanning trees}\label{secGraphs}

In this section we define the classes of graphs called $(k,n)$-ladders and $(k,n)$-accordions and study  QMST and its variations on these graphs. We also study the number of spanning trees on such graphs, that is, the number of feasible solutions of the corresponding QMST.

\begin{figure}[h]
     	\centering
	\includegraphics[scale=0.55]{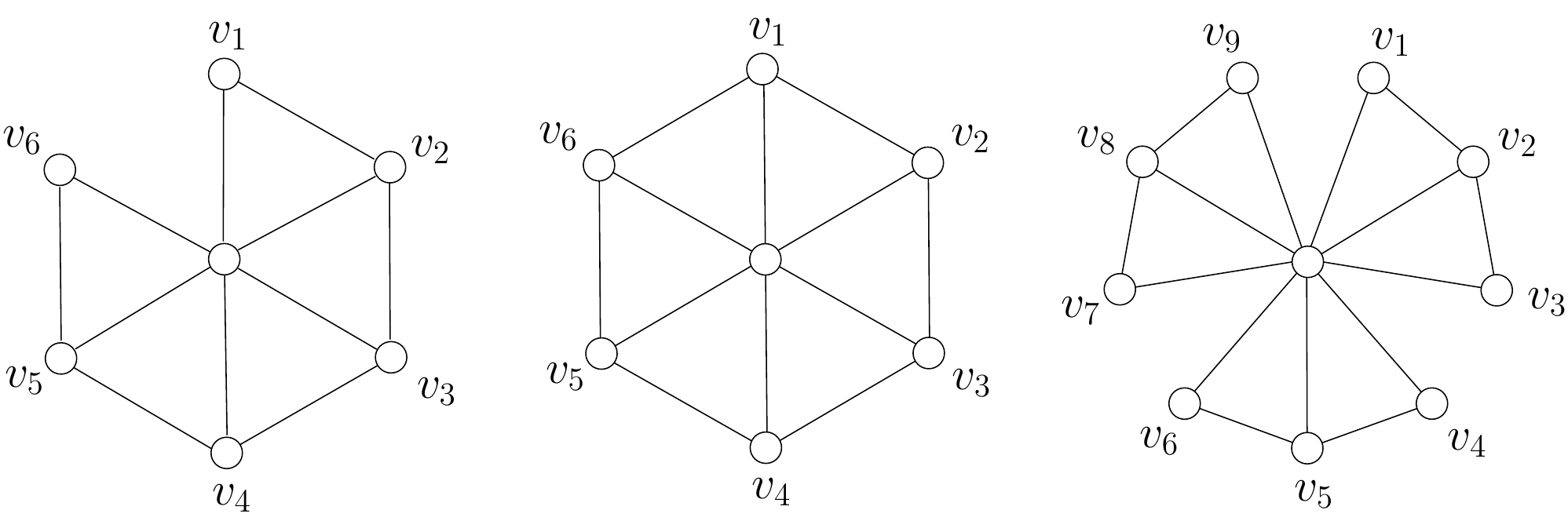}
    	\caption{A fan $F_6$, a wheel $W_6$ and a fan-star $FS_9$}
	\label{fan}
\end{figure}
Given a path $P_n=v_1-v_2-\cdots-v_n$, \textsl{a fan ($F_n$)} is
obtained by introducing a new node $u$ and edges $(u,v_i)$ for
$i=1,2,\ldots,n$. If we add one more edge $(v_1,v_n)$ to $F_n$,
the resulting graph is called \textsl{a wheel}, denoted by $W_n$. When
$n=3k$, deleting the edges $(v_{3i},v_{3i+1}),$
$i=1,\ldots,k-1,$ from $F_n$ results in \textsl{a fan-star
($FS_n$)}. Examples of a fan, a wheel and a fan-star are presented in
Figure~\ref{fan}.

Let $P_n^1=v_1-v_2-\cdots -v_n$ and
$P_n^2=u_1-u_2-\dots-u_n$ be two node-disjoint paths.  Add $n$ edges $(v_i,u_i),$
$i=1,\ldots, n,$ and the resulting graph is called a
\textsl{ladder ($L_n$)}, see Figure~\ref{ladder}.
\begin{figure}[h]
   	\centering
	\includegraphics[scale=0.6]{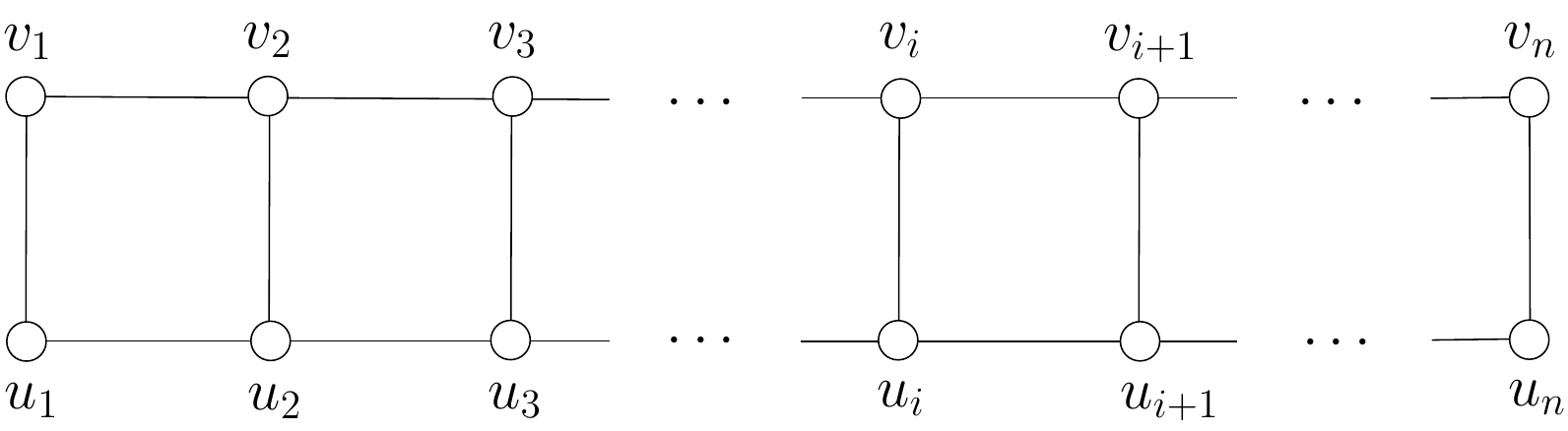}
    	\caption{A ladder $L_n$}
	\label{ladder}
\end{figure}

Let us now define a class of graphs that generalizes both fans and ladders. Given integers $k\geq 3$ and $n\geq 1$, a \textsl{$(k,n)$-accordion} is a  graph $A$ that can be obtained by recursively fusing together $n$ node-disjoint $k$-cycles $C^k_1,C^k_2,\ldots,C^k_n$ along an edge so that in the resulting graph, only two consecutive cycles have an edge in common. More precisely, a $(k,n)$-accordion  is a graph constructed using the following rules:
\begin{enumerate}
	\item[(i)] Initialize the graph $A_1$ as the $k$-cycle $C^k_1$. Embed $A_1$ on the plane and designate all its edges as free edges.
	\item[(ii)] For $i= 2$ to $n$ define $A_i$ as follows:
        Choose a free edge $(r,s)$ of $A_{i-1}$. Introduce a $k$-cycle, say $C^k_i$ to $A_{i-1}$ using the edge $(r,s)$ and $k-2$ new nodes so that we get a planar embedding of the resulting graph $A_i$. Designate any edge incident to a new node of $A_i$ as a free edge.
\end{enumerate}
Every so obtained graph $A_{n}$ is a $(k,n)$-accordion, and the set of all $(k,n)$-accordions is denoted by $A(k,n)$. 
Figure~\ref{knladder} presents examples from $A(5,7)$ and $A(4,9)$.\begin{figure}[h]
	\centering
    \begin{subfigure}[b]{0.4\textwidth}
	\centering
	\includegraphics[scale=0.53]{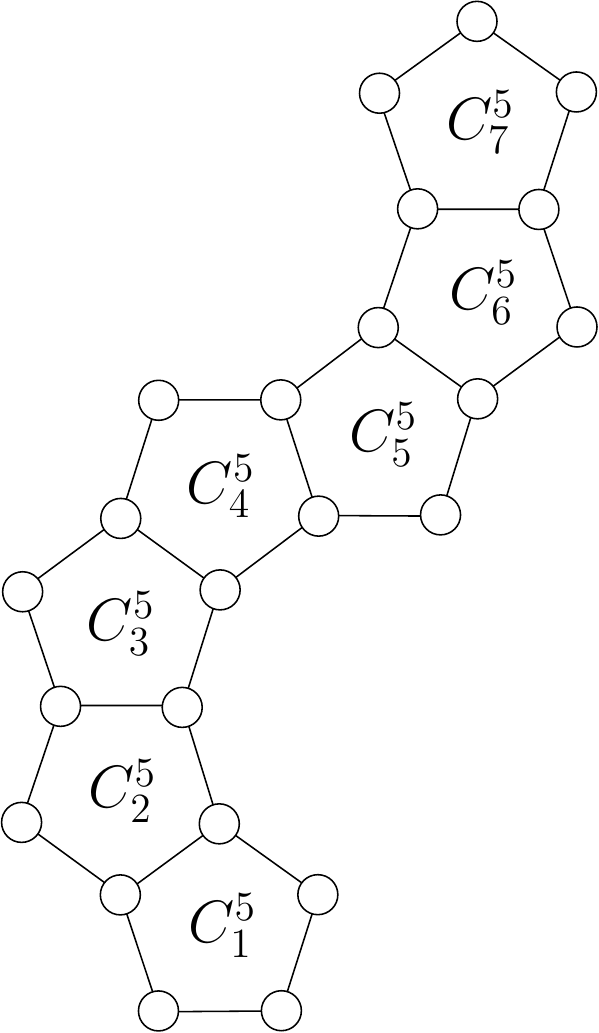}
        \caption{}
        \label{fig:ladd}
    \end{subfigure}
    \begin{subfigure}[b]{0.4\textwidth}
	\centering
		\includegraphics[scale=0.55]{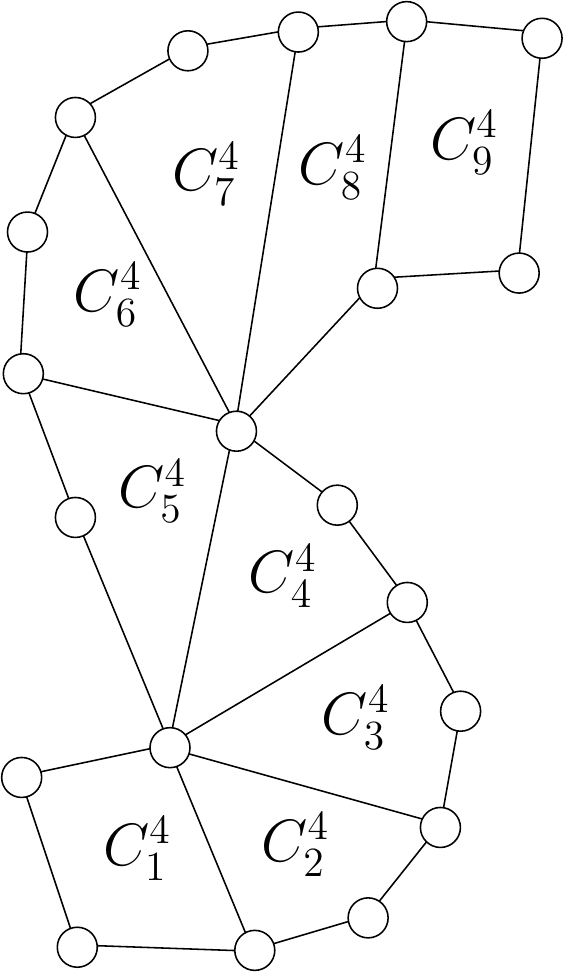}
        \caption{}
        \label{fig:acc}
    \end{subfigure}
    \caption{Examples of $(5,7)$-accordion and $(4,9)$-accordion}
	\label{knladder}
\end{figure}

Let us now define a subclass of $A(k,n)$ which we call \textsl{$(k,n)$-ladders}. It is defined the same way as $(k,n)$-accordions, except that in the construction scheme for a $(k,n)$-ladder, we designate an edge as `free' if both its end points are the new nodes introduced. (Note that for the construction of a $(k,n)$-accordion, an edge is designated as free if at least one of its end points is a new node.)  We denote by $L(k,n)$ the set of all $(k,n)$-ladders.

Note that the $(5,7)$-accordion in Figure~\ref{knladder}\subref{fig:ladd} is also a $(5,7)$-ladder, while the $(4,9)$-accordion in Figure~\ref{knladder}\subref{fig:acc} is not a $(4,9)$-ladder. It is easy to verify that a fan is a $(3,n)$-accordion but not a $(3,n)$-ladder, and $L_n$ is a $(4,n)$-ladder and it is unique.

\medskip
The formulas for counting the number of spanning trees of $F_n$ and
$L_n$ are already known~\cite{Hil74, Sed69}. Now we generalize these results by deriving a formula for counting the number of spanning trees in  $(k,n)$-accordions and $(k,n)$-ladders.

Let $\tau(G)$ be the number of spanning trees
of a graph $G$. Then for any $e \in E(G)$ the following property holds.

\begin{lemma} \label{rr}
$\tau(G)=\tau(T-e)+\tau(G/e)$, where $G-e$ is the graph obtained by
deleting $e$ from $G$, and $G/e$ is obtained by coalescing the
endpoints of $e$.
\end{lemma}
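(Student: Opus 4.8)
The plan is to prove this classical deletion--contraction recurrence by partitioning the spanning trees of $G$ into those that avoid $e$ and those that use $e$, and counting each class separately. (I read the right-hand side as $\tau(G-e)+\tau(G/e)$, matching the definitions of $G-e$ and $G/e$ that follow the statement.) The identity $\tau(G)=\tau(G-e)+\tau(G/e)$ will then follow by adding the two counts.

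First I would observe that a spanning tree $T$ of $G$ with $e\notin T$ is precisely a spanning tree of $G-e$: deleting $e$ removes no vertex, so $T$ still spans all of $V(G)$, remains connected and acyclic, and uses only edges of $G-e$; conversely every spanning tree of $G-e$ spans $G$ and avoids $e$. Hence the number of spanning trees of $G$ not containing $e$ equals $\tau(G-e)$.

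Next I would handle the spanning trees $T$ with $e\in T$ by setting up a bijection with the spanning trees of $G/e$. Given such a $T$, I contract $e$ inside $T$ to obtain $T/e$; the points to verify are that $T/e$ has exactly one fewer vertex and one fewer edge than $T$, and that coalescing the endpoints of a tree edge keeps the graph connected and acyclic, so $T/e$ is a spanning tree of $G/e$. In the reverse direction, any spanning tree $T'$ of $G/e$ lifts to a subgraph of $G$ by reinstating $e$ together with the edges of $T'$ (each edge of $G/e$ corresponds to an edge of $G$); one checks that this lift is a spanning tree of $G$ containing $e$ and that the two maps are mutually inverse. Thus the number of spanning trees of $G$ containing $e$ equals $\tau(G/e)$, and summing over the two classes gives the claim.

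The main obstacle is making the contraction bijection airtight when $G$ is a multigraph, which is exactly the setting needed later, since repeated contractions of the accordion graphs produce parallel edges. I would be careful that contracting a tree edge $e=(r,s)$ never creates a loop that could belong to $T/e$ (it cannot, as $e$ is the only edge of $T$ joining the two sides), and that distinct edges of $G$ incident to $r$ and $s$ remain distinct parallel edges of $G/e$ rather than being silently identified. Tracking edges by their identity in $G$ rather than by their endpoint pairs keeps the correspondence between $E(G/e)$ and $E(G)\setminus\{e\}$ a genuine bijection, which is what makes the lift well defined and the two counts exact.
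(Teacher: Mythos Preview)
Your proof is correct and follows exactly the same approach as the paper: partition the spanning trees of $G$ according to whether they contain $e$, identifying those that avoid $e$ with spanning trees of $G-e$ and those that contain $e$ with spanning trees of $G/e$. The paper's own proof is a one-sentence sketch of this partition, so your write-up simply supplies the details (including the multigraph care) that the paper leaves implicit.
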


The proof is straightforward due to the fact that the total number of spanning trees in $G$ is the number of spanning trees that contain $e$, plus the number of spanning trees without $e$.

A recursive formula for the number of spanning trees of a $(k,n)$-accordion is given in the following theorem.

\begin{theorem} \label{enu}
	Every $(k,n)$-accordion has the same number of spanning trees. If we denote this number by $\tau(A(k,n))$, then for every integers $k\geq 3$, $n\geq 3$
\begin{equation}\label{eqThm}
	\tau(A(k,n))=k \cdot \tau(A(k,n-1))-\tau(A(k,n-2)).
\end{equation}
\end{theorem}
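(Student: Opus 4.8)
The plan is to induct on $n$, establishing the invariance of the count and the recurrence simultaneously. The first thing I would pin down is the structural reading of the construction: the requirement that \emph{only two consecutive cycles share an edge} forces an accordion to be a linear chain $C^k_1-C^k_2-\cdots-C^k_n$ in which $C^k_i$ and $C^k_{i+1}$ meet in exactly one edge and no cycle is glued to more than two others. In particular the last cycle $C^k_n$ is attached to $C^k_{n-1}$ along a single edge $e=(r,s)$ that is a \emph{free} edge of the sub-accordion $A_{n-1}=C^k_1-\cdots-C^k_{n-1}\in A(k,n-1)$, and I set $A_{n-2}=C^k_1-\cdots-C^k_{n-2}\in A(k,n-2)$. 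The base cases $n=1$ (a single $k$-cycle, $\tau=k$) and $n=2$ (the unique graph formed by two $k$-cycles sharing an edge) give well-defined values, so by strong induction I may assume $\tau$ is a constant $\tau(A(k,m))$ on $A(k,m)$ for every $m<n$.

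The heart of the argument is a local formula obtained by peeling off $C^k_n$. Write $A_n=A_{n-1}+P$, where $P$ is the path of $k-1$ new edges joining $r$ to $s$ through the $k-2$ new nodes of $C^k_n$. I would classify each spanning tree $T$ of $A_n$ by $T\cap P$. Since every new node has degree $2$ and lies only on $P$, it cannot be isolated in $T$; a short argument then shows $T$ omits at most one edge of $P$. If $T$ omits exactly one of the $k-1$ edges of $P$, the two resulting sub-paths hang as pendant paths off $r$ and $s$, forcing $T\cap A_{n-1}$ to be a spanning tree of $A_{n-1}$, which yields $(k-1)\,\tau(A_{n-1})$ trees. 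If $T$ uses all of $P$, then $P$ already joins $r$ to $s$, so $T\cap A_{n-1}$ must be a spanning $2$-forest of $A_{n-1}$ separating $r$ and $s$; the number of such forests equals the number of spanning trees of $A_{n-1}$ containing $e$, that is, $\tau(A_{n-1}/e)$. Hence
\[
\tau(A_n)=(k-1)\,\tau(A_{n-1})+\tau(A_{n-1}/e).
\]

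To finish, I would reduce $\tau(A_{n-1}/e)$. By Lemma~\ref{rr}, $\tau(A_{n-1}/e)=\tau(A_{n-1})-\tau(A_{n-1}-e)$, so it suffices to prove $\tau(A_{n-1}-e)=\tau(A_{n-2})$. Here the leaf cycle $C^k_{n-1}$ meets the rest of $A_{n-1}$ only along its gluing edge with $C^k_{n-2}$, whose endpoints lie in $A_{n-2}$; since $e$ is one of the \emph{new} edges of $C^k_{n-1}$, deleting $e$ turns the new path of $C^k_{n-1}$ into one or two pendant paths attached to $A_{n-2}$. As pendant paths do not change the spanning-tree count, $\tau(A_{n-1}-e)=\tau(A_{n-2})$. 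Substituting gives
\[
\tau(A_n)=(k-1)\,\tau(A_{n-1})+\tau(A_{n-1})-\tau(A_{n-2})=k\,\tau(A_{n-1})-\tau(A_{n-2}),
\]
and by the induction hypothesis the right-hand side equals $k\,\tau(A(k,n-1))-\tau(A(k,n-2))$, a value independent of the particular accordion $A_n$. This simultaneously proves the invariance and the recurrence.

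The hard part will be the structural reading of the construction, and getting it right is essential: the naive ``build on any accumulated free edge'' interpretation permits a cycle to be glued to three others, and a direct calculation (for instance with $k=4$, $n=4$) then produces a count differing from the claimed one. Thus the step demanding the most care is arguing from ``only two consecutive cycles share an edge'' that the gluing edge $e$ of the last cycle must lie in the leaf cycle $C^k_{n-1}$ and be incident to a new node; this is exactly what makes the pendant-path reduction $\tau(A_{n-1}-e)=\tau(A_{n-2})$ valid for \emph{every} admissible choice of $e$, and hence what delivers a uniform count across $A(k,n)$. The only other point needing genuine verification is the elementary claim that a spanning tree omits at most one edge of $P$, which underpins the clean two-case split.
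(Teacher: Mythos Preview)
Your argument is correct and follows essentially the same route as the paper's proof. Both peel off the last cycle $C^k_n$, split spanning trees of $A_n$ according to whether all $k-1$ new edges are used or exactly one is omitted, obtain $\tau(A_n)=(k-1)\tau(A_{n-1})+\tau(A_{n-1}/e)$ (the paper writes the second summand as $\tau(B^1)$, the contraction of the gluing edge, which is exactly your $2$-forest count), and then eliminate $\tau(A_{n-1}/e)$ via Lemma~\ref{rr} together with the pendant-path identity $\tau(A_{n-1}-e)=\tau(A_{n-2})$; your more explicit discussion of why the gluing edge must lie on the new path of $C^k_{n-1}$ is a welcome clarification but not a different method.
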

\begin{proof}
Let $A$ be a $(k,n)$-accordion generated by $k$-cycles $C^k_i,\ldots,C^k_n$. Similarly let $B$ and $C$ be the corresponding $(k,n-1)$ and $(k,n-2)$-accordions generated respectively by the $k$-cycles $C^k_1,\ldots,C^k_{n-1}$ and $C^k_1,\ldots,C^k_{n-2}$. An edge $e$ in $A$ is called a `free edge' if $e\in C^k_n$ and $e\notin C^k_{n-1}$. Likewise, an edge $e$ in $B$ is called a `free edge' if $e\in C^k_{n-1}$ and $e\notin C^k_{n-2}$.  Let $A^1$ be the graph obtained by contracting a free edge of $A$ and $B^1$ be the graph obtained by contracting a free edge of $B$. If $e$ is a free edge of $B$ then $\tau(B-e)=\tau(C)$. Thus from Lemma \ref{rr},
\begin{equation}\label{eq1}
\tau(B) = \tau(C)+\tau(B^1).
\end{equation}
Note that any spanning tree of $A$ either contains all free edges or does not contain exactly one free edge. Further, any spanning tree that contains all free edges does not contain the edge $e^*$ which is common to both $C^k_n$ and $C^k_{n-1}$. Then the graph obtained from $A$ by deleting $e^*$ and then contracting the path of free edges is isomorphic to $B^1$. Since $A$ contains $k-1$ free edges, we have
\begin{equation}\label{eq2}
\tau(A) = (k-1)\tau(B) +\tau(B^1).
\end{equation}
From (\ref{eq1}) and (\ref{eq2}) we have
\begin{equation}\label{eq3}
\tau(A) = k\tau(B) - \tau(C).
\end{equation}
Recall that $A$ is an arbitrary $(k,n)$-accordion with $n\geq 3$, and $B$, $C$ depend on $A$, and are $(k,n-1)$-accordion and $(k,n-2)$-accordion, respectively. Note that there is only one $(k,1)$-accordion and all $(k,2)$-accordions are isomorphic. Hence, for a fixed $k$ and $n\leq2$, every $(k,n)$-accordion has the same number of spanning trees. Then from recursion \eqref{eq3} it follows that for every fixed $k\geq 3$ and $n\geq 1$, the number of spanning trees for any $(k,n)$-accordion is the same, and hence its recursive formula is given by \eqref{eqThm}.
\end{proof}

Theorem~\ref{enu} gives an implicit formula for the number of spanning trees on general $(k,n)$-accordions. In the case $k=3$, it gives us $\tau(A(3,n))=3 \cdot\tau(A(3,n-1))-\tau(A(3,n-2)).$ By solving the recursion we obtain
\begin{equation}\label{eq4}
	\tau(A(3,n))=\frac{1}{\sqrt{5}}\left\{\left(\frac{3+\sqrt{5}}{2}\right)^{n+1}
-\left(\dfrac{3-\sqrt{5}}{2}\right)^{n+1}\right\}.
\end{equation}
When $k=4$, $\tau(A(4,n))=4 \cdot \tau(A(3,n-1))-\tau(A(4,n-2))$, from which it follows that
\begin{equation}\label{eq5}
	\tau(A(4,n))=\frac{\sqrt{3}}{6}\left\{\left(2+\sqrt{3}\right)^{n+1} -\left(2-\sqrt{3}\right)^{n+1}\right\}.
\end{equation}
Formulas \eqref{eq4} and \eqref{eq5} are consistent with the known spanning tree enumeration formulas for fans and ladders~\cite{Hil74, Sed69}, moreover, they generalize them. Furthermore, Theorem~\ref{enu} can be used to deduce explicit formulas for the $(k,n)$-accordions for any fixed $k$. Since every element of $A(k,n)$ contains an exponential number of spanning trees, solving the QMST and its variations on this class of graphs by complete enumeration would be computationally expensive.

\section{Complexity of the QMST variations on some sparse graphs}\label{secComplexity}

We now investigate the complexity of the QMST and its variations on the sparse graphs discussed in Section~\ref{secGraphs}.

\subsection{Intractability results}\label{subsItrac}
\medskip
Recall from Section~\ref{secIntro} that FSTAC is the feasibility version of the adjacent only minimum spanning tree problem with conflict pair constraints.
\begin{theorem}\label{base}
The FSTAC on fan-stars is NP-complete.
\end{theorem}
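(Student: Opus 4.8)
The plan is to establish membership in NP and then give a polynomial reduction from \textsc{3-Coloring}, exploiting the fact that in a fan-star every spoke is incident to the hub $u$, so any two spokes form an adjacent (hence admissible) conflict pair. Membership in NP is immediate: a spanning tree serves as a certificate with $O(n)$ edges, and verifying that it is a tree and that it contains at most one edge of each conflict pair takes polynomial time.

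For the reduction, I would first record the local structure of $FS_{3k}$: after deleting the edges $(v_{3i},v_{3i+1})$, the path vertices split into $k$ independent triples $\{v_{3i-2},v_{3i-1},v_{3i}\}$, and the only edges joining one triple to the rest of the graph are its three spokes to $u$. Consequently, in any spanning tree each triple must use at least one spoke in order to reach $u$. Given a graph $H=(V_H,E_H)$ with $|V_H|=N$, I would build $FS_{3N}$ with one triple (a ``group'') per vertex of $H$, interpreting the three spokes of group $x$ as the three possible colors of $x$.

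The conflict set $S$ then consists of two kinds of adjacent pairs, all of them pairs of spokes meeting at $u$. The \emph{intra-group} pairs place a conflict between every two spokes of the same group; combined with the connectivity observation above, this forces each group to contain exactly one spoke, so a feasible tree selects exactly one color per vertex (the other two vertices of the triple are attached through the conflict-free internal path edges $(v_{3i-2},v_{3i-1})$ and $(v_{3i-1},v_{3i})$). The \emph{inter-group} pairs encode the edges of $H$: for each $(x,y)\in E_H$ and each color $c\in\{1,2,3\}$ I would add a conflict between the color-$c$ spoke of $x$ and the color-$c$ spoke of $y$, forbidding both endpoints of an edge from receiving the same color. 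A conflict-free spanning tree then corresponds exactly to a proper $3$-coloring of $H$; conversely, any proper coloring yields a feasible tree by taking, in each group, the chosen spoke together with its two path edges. This gives the equivalence in both directions and, with NP membership, yields NP-completeness.

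The step I expect to require the most care is the ``exactly one spoke per group'' argument: I must argue cleanly that connectivity through the single hub $u$ forces at least one spoke in each group, that the intra-group conflicts cap the count at one, and that the two leftover vertices of each group can always be reattached via the conflict-free internal path edges, so feasibility is never obstructed by them. Once this is in place, the correspondence between the surviving spoke choices and colors, and the matching of inter-group conflicts to the edges of $H$, is a routine verification.
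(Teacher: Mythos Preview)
Your reduction is correct, but it proceeds from a different NP-hard source than the paper. The paper reduces \textsc{3-SAT}: each triple of the fan-star encodes a clause, each spoke encodes a literal, and the only conflict pairs are between spokes whose literals are negations of one another. Connectivity through the hub forces at least one spoke per triple, which is exactly ``at least one true literal per clause''; no upper bound on spokes per triple is needed, so there are no intra-group conflicts at all. Your \textsc{3-Coloring} reduction instead interprets each triple as a vertex and each spoke as a color, and therefore must manufacture the ``exactly one'' constraint via the extra intra-group conflict pairs before the inter-group pairs can encode edges. Both constructions are polynomial and both keep all conflict pairs adjacent at $u$; the paper's buys a leaner conflict set and a one-line feasibility argument (any satisfying assignment yields an acyclic set of spokes that can be completed with path edges), while yours has the advantage that the resulting tree is canonical (always the two path edges plus the single chosen spoke in every triple), making the converse direction especially clean.
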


\begin{proof}
We reduce the 3-SAT problem to the FSTAC on the fan star $FS_n$.

Let
$s= (x_{11}\vee x_{12}\vee x_{13})\wedge(x_{21}\vee x_{22}\vee x_{23}) \wedge \cdots
\wedge(x_{n1}\vee x_{n2}\vee x_{n3}),$ be an instance of 3-SAT given in conjunctive normal form. From this instance,
we construct a graph $G$ with node set $\{u, v_{11},v_{12},v_{13},\dots,v_{n1},v_{n2},v_{n3}\}$
and edge set $E_1\cup E_2$, where
$E_1=\{e_{ij}=(u,v_{ij}):i=1,\ldots,n,\ j=1,2,3\}$,
$E_2=\{(v_{ij},v_{ij+1}):i=1,\ldots, n,\ j=1,2\}$. As shown in Figure~\ref{figFS}, $G$ is a fan-star.
\begin{figure}[h]
   	\centering
	\includegraphics[scale=0.55]{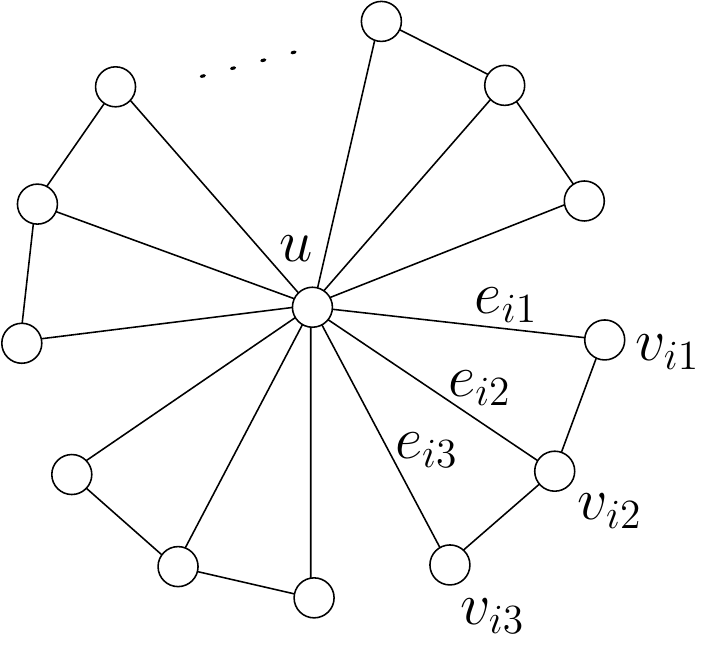}
	\vspace{-5pt}
    	\caption{}
	\label{figFS}
\end{figure}
The conflict set $S$ is defined as $S=\{\{e_{ij}, e_{kl}\}\colon
 x_{ij} \mbox{ and } x_{kl} \mbox{ are negations of each other}\}.$
Note that the edges in each conflict pair are adjacent.

If $T$ is a solution of the FSTAC on $G$, then let
$X_T=\{x_{ij}:i=1,\ldots,n,\ j=1,2,3\}$ be such that $x_{ij}$ is true if
$e_{ij}\in T$, and false otherwise. For each $i=1,\ldots,n$, at least one of
$x_{i1},\ x_{i2},\ x_{i3}$ must be true
since $T$ contains at least one of $e_{i1},\ e_{i2},\ e_{i3}$.
Moreover, $e_{ij}$ and $e_{kl}$ cannot both be in $T$ if
$\{e_{ij},e_{kl}\}\in S$, so in $X_T$ at most one of $x_{ij}$,
$x_{kl}$ will be true if they are negations of each other. Hence
$X_T$ is a true assignment for the 3-SAT problem.

Conversely, suppose $X$ is a true assignment of the 3-SAT problem.
Then $S=\{e_{ij}\colon x_{ij} \text{ is }\allowbreak \text{true}\text{ in } X\}$ is an acyclic
subgraph of $G$ with at most one edge from each conflict pair. If
$S$ spans $G$, then it is a solution of the FSTAC. Otherwise, we add
necessary edges from $E_2$ to $S$ to form a spanning tree $T$, which gives us a
solution of the FSTAC.

The result now follows from the NP-completeness of 3-SAT.
\end{proof}

Since FSTAC is the feasibility version of MSTAC, the MSTAC is also NP-hard on fan-stars. Fan-star is a subgraph of a fan and a wheel, hence the MSTAC on fan-stars can be reduced to the MSTAC on fans or wheels by assigning large costs on additional edges. That proves NP-hardness of the MSTAC on fans, wheels and $(k,n)$-accordions.
MSTAC is a special case of MSTC and AQMST, and MSTC is a special case of QMST, hence all of those problems are NP-hard on fan-stars, fans, wheels and $(k,n)$-accordions. Furthermore, from Theorem~\ref{base} it easily  follows that all bottleneck versions of these problems are NP-hard on fan-stars, fans, wheels and $(k,n)$-accordions. These observations are summarized in the following corollary.

\begin{corollary}\label{corSum}
	MSTAC, MSTC, AQMST, QMST, BSTAC, BSTC, AQBST and QBST are NP-hard on fan-stars, fans, wheels and $(k,n)$-accordions.
\end{corollary}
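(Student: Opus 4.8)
The plan is to establish Corollary~\ref{corSum} entirely by reduction, leveraging the NP-completeness of FSTAC on fan-stars already proved in Theorem~\ref{base}. The key observation is that NP-hardness propagates in two independent directions: through the \emph{graph hierarchy} (fan-star $\subseteq$ fan, wheel, $(k,n)$-accordion) and through the \emph{problem hierarchy} (MSTAC is a special case of MSTC and of AQMST, and these are special cases of QMST; analogously for the bottleneck versions). Since Theorem~\ref{base} gives hardness of the feasibility problem FSTAC, and FSTAC is exactly the feasibility version of MSTAC, hardness of MSTAC on fan-stars follows immediately.

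The first step is to climb the graph hierarchy. A fan-star $FS_n$ is a subgraph of the corresponding fan $F_n$, wheel $W_n$, and of appropriate $(k,n)$-accordions, obtained by adding back the deleted path edges (and, for the wheel, the rim edge $(v_1,v_n)$). To reduce an MSTAC instance on a fan-star to one on the larger graph, I would retain the original edge costs and conflict pairs on the fan-star edges and assign a prohibitively large cost $M$ to each additional edge, with no new conflicts. Any optimal spanning tree of the enlarged graph will then avoid the expensive edges whenever a spanning tree confined to the fan-star exists, so the enlarged instance has a solution of bounded cost if and only if the fan-star instance is feasible. This transfers NP-hardness of MSTAC from fan-stars to fans, wheels, and $(k,n)$-accordions.

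The second step is to climb the problem hierarchy. Because MSTAC is obtained from MSTC by restricting conflict pairs to adjacent edges, and from AQMST by the penalty-encoding already described in Section~\ref{secIntro} (set $q(e,f)=1$ on conflict pairs and $0$ otherwise, so that the objective is $0$ iff the tree is conflict-free), every MSTAC instance is a bona fide instance of both MSTC and AQMST; MSTC and AQMST are in turn special cases of QMST. The same chain of specializations applies to the bottleneck objective: BSTAC specializes to BSTC and to AQBST, which specialize to QBST, and the min-max version of the same penalty encoding (a feasible conflict-free tree attains bottleneck value $0$, any other attains $1$) links the feasibility question to each of these. Thus hardness of the base problem forces hardness of all eight listed problems.

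Composing the two hierarchies yields the corollary: every listed problem, on every listed graph, contains MSTAC (or its bottleneck analogue) on fan-stars as a special case. I do not anticipate a genuine obstacle here, as each reduction is a routine cost assignment or restriction; the only point requiring a modicum of care is verifying that the large-cost padding in the graph reduction does not accidentally create a feasible low-cost tree using a padded edge when the fan-star instance is infeasible, which is handled by choosing $M$ strictly larger than the total cost of any fan-star spanning tree and introducing no conflicts on padded edges.
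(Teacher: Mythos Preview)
Your proposal is correct and mirrors the paper's own argument almost exactly: the paper also starts from Theorem~\ref{base}, observes that FSTAC is the feasibility version of MSTAC, lifts hardness through the graph hierarchy by padding the extra edges of the fan, wheel, and $(k,n)$-accordion with large costs, and then propagates through the problem hierarchy (MSTAC $\to$ MSTC, AQMST $\to$ QMST, and the analogous bottleneck chain). Your write-up is somewhat more explicit about the penalty encoding and the choice of $M$, but the underlying reductions are identical.
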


Next we identify some intractability results for problems on ladders.

\begin{theorem} \label{Nladder}
The FSTC on ladders is NP-complete.
\end{theorem}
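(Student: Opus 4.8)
Since a spanning tree, presented as its edge set, is a polynomial-size certificate whose validity and whose compliance with every conflict pair can be checked in polynomial time, FSTC lies in NP; the work is in the hardness. As in Theorem~\ref{base} the plan is to reduce from 3-SAT, but now the \emph{non-adjacent} conflict pairs must carry the load, since on ladders the adjacent-only version is polynomially solvable. Denote the ladder edges by top edges $t_i=(v_i,v_{i+1})$, bottom edges $b_i=(u_i,u_{i+1})$, and rungs $r_i=(v_i,u_i)$.

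The reduction would rest on two opposite structural facts. First, connectivity supplies \emph{OR-type} constraints for free: every edge cut of the ladder must meet the tree in at least one edge. The straight cut between columns $i$ and $i+1$ meets only $\{t_i,b_i\}$, forcing at least one of these two to be present; the staircase cut separating $\{v_1,\dots,v_i,u_1,\dots,u_{i-1}\}$ from the rest meets exactly $\{b_{i-1},r_i,t_i\}$, forcing at least one of these \emph{three}. Second, each conflict pair gives a \emph{NAND-type} ``at most one present'' constraint, and because conflicts need not be adjacent, I may wire together edges in arbitrarily distant parts of the ladder. Concretely, I would use one \emph{variable column} $i(x)$ per variable $x$ carrying the conflict $\{t_{i(x)},b_{i(x)}\}$, so that the straight cut (at least one) together with the conflict (at most one) forces exactly one of the two into the tree; declare $x$ true iff $t_{i(x)}$ is chosen. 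For each clause $(\ell_1\vee\ell_2\vee\ell_3)$ I would reserve a \emph{clause location} furnishing a staircase cut $\{e_1,e_2,e_3\}$ and tie each $e_k$ to its literal by a single conflict: add $\{b_{i(x)},e_k\}$ if $\ell_k=x$, and $\{t_{i(x)},e_k\}$ if $\ell_k=\neg x$. With this polarity $e_k$ can be in the tree only when $\ell_k$ is true, while the cut forces some $e_k$ present, so the clause location is spannable exactly when some literal is true.

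Correctness then splits in the usual way. For the forward direction, a feasible tree reads off an assignment from the variable columns, and for each clause its staircase cut must contain some $e_k$, whose defining conflict certifies that $\ell_k$ is true, so the clause holds. For the converse, given a satisfying assignment I would orient every variable column accordingly, place one satisfying clause-edge into the tree at each clause location, and then complete the selection to a spanning tree of the whole ladder, checking that no conflict pair is fully included and that the edges forced absent never disconnect the graph.

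The main obstacle is exactly this backward completion together with the layout. Because a ladder is only two vertices wide, the three literals of a clause cannot sit on one straight cut, so each clause needs its own staircase location; I must therefore place all variable columns and clause locations inside a single genuine ladder, separated by filler columns, so that the gadgets are sufficiently edge-disjoint that the local choices always extend to a global spanning tree and no unintended cut or cycle constraint interferes. Proving this ``always completable'' direction—rather than the comparatively routine forward direction—is where the care lies, and it is precisely here that the freedom to use non-adjacent conflict pairs, unavailable in the polynomially solvable adjacent-only case, is indispensable.
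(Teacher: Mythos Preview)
Your proposal is workable but takes a more elaborate route than the paper, and the extra machinery is exactly what creates the ``main obstacle'' you flag.

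The paper's reduction has no variable gadgets at all. It mirrors the fan-star argument of Theorem~\ref{base} verbatim: for each clause $i$ it reserves a small block of the ladder whose connectivity forces at least one of three designated edges $e_{i1},e_{i2},e_{i3}$ (one per literal occurrence) into any spanning tree, and the conflict set is simply
\[
S=\bigl\{\{e_{ij},e_{kl}\}:x_{ij}\text{ and }x_{kl}\text{ are negations of each other}\bigr\}.
\]
That is the entire construction. In the forward direction a feasible tree yields a consistent partial assignment exactly as in Theorem~\ref{base}; in the backward direction one picks, for every clause, one literal edge corresponding to a literal made true by the satisfying assignment, and completes with the non-literal ``path'' edges of each block---these carry no conflicts, so the completion is automatic. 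This is why the paper can legitimately say the details are ``very similar'' to the fan-star case.

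Your encoding instead introduces a separate variable column per variable (with the conflict $\{t_{i(x)},b_{i(x)}\}$ forcing a binary choice) and links each clause edge to its variable column through an additional conflict. This is correct in spirit and your forward direction goes through cleanly. But the price is that every clause edge now carries a conflict with a \emph{forced} variable edge, so in the backward direction you must certify that, whatever subset of the three staircase edges is forbidden, the clause block can still be spanned using only filler edges plus the allowed staircase edges---and simultaneously that the variable columns and clause blocks, once interleaved in a single ladder, never interact through an unintended cut. You correctly identify this as the crux and leave it unresolved. It can be done with one or two filler columns separating consecutive gadgets (each straight and staircase cut crossing a gadget always contains a conflict-free filler edge), but you would need to write that verification out.

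In short: both reductions are valid, but the paper's literal-occurrence encoding sidesteps your ``backward completion'' difficulty entirely, because its non-literal edges are conflict-free and already suffice to complete any choice of one satisfying literal per clause to a spanning tree.
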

\begin{proof}
Again we reduce the 3-SAT problem  to the FSTC on ladders.

Let
$s= (x_{11}\vee x_{12}\vee x_{13})\wedge(x_{21}\vee x_{22}\vee x_{23}) \wedge \cdots
\wedge(x_{n1}\vee x_{n2}\vee x_{n3}),$ be an instance of 3-SAT given in conjunctive normal form. From this,
we construct a ladder $G$ shown in Figure~\ref{figLadNP}. Let $S=\{\{e_{ij}, e_{kl}\}: x_{ij}
\mbox{ and } x_{kl} \mbox{ are negations of each other}\}$. Note that in $S$ the conflict edges are not necessarily adjacent.
\begin{figure}[h]
   \centering
	\includegraphics[scale=0.35]{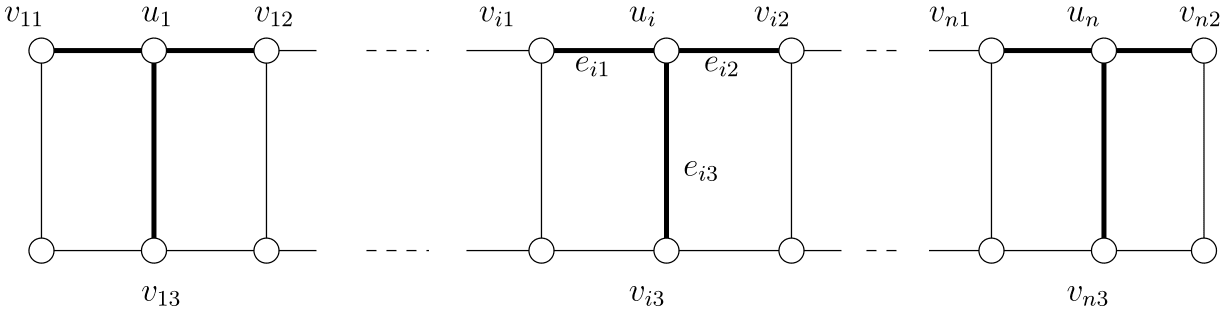}
    \caption{A ladder constructed from a 3-SAT instance}
	\label{figLadNP}
\end{figure}
Then there exists a solution $T$ of the FSTC on $G$, if and only if
the 3-SAT instance has a true assignment. The detailed proof is very
similar to the one given in Theorem~\ref{base}, and hence omitted.
\end{proof}
Again, NP-completeness of FSTC is propagated to MSTC, QMST, BSTC and QBST, as is summarized in the following corollary.

\begin{corollary}\label{corLadSum}
	MSTC, QMST, BSTC and QBST are NP-hard on ladders and $(k,n)$-ladders.
\end{corollary}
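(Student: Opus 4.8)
The plan is to propagate the NP-completeness of FSTC on ladders, established in Theorem~\ref{Nladder}, to each of the four optimization problems, mirroring the argument used for the fan-star family preceding Corollary~\ref{corSum}. The unifying observation is that FSTC is the common feasibility question underlying MSTC, BSTC, QMST and QBST, so a polynomial-time algorithm for any one of them would decide FSTC in polynomial time, contradicting Theorem~\ref{Nladder} unless $\mathrm{P}=\mathrm{NP}$.

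First I would dispatch MSTC and BSTC directly. Both minimize an objective (a sum of edge costs and a bottleneck edge cost, respectively) over exactly the spanning trees that respect every conflict pair in $S$, which is precisely the feasible set probed by FSTC. An algorithm solving either problem on ladders would in particular detect whether that feasible set is nonempty, hence solve FSTC; so MSTC and BSTC are NP-hard on ladders.

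Next I would invoke the two reductions already recorded in Section~\ref{secIntro}. Starting from an FSTC instance on a ladder, put $q(e,f)=1$ when $\{e,f\}\in S$ and $q(e,f)=0$ otherwise, with every $c_e=0$, keeping the same ladder as the underlying graph. Then the instance is feasible if and only if the resulting QMST attains optimal value $0$, and equally the QBST attains optimal bottleneck value $0$ if and only if the instance is feasible, which is exactly the stated equivalence of FQBST with FSTC. As the underlying graph stays a ladder throughout, QMST and QBST inherit NP-hardness on ladders.

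Finally, for the $(k,n)$-ladder class I would appeal to the earlier remark that $L_n$ is a $(4,n)$-ladder and is the unique such graph; consequently ladders constitute a subclass of the $(k,n)$-ladders, and every hardness conclusion above transfers immediately. I do not anticipate a genuine obstacle, since the whole argument is bookkeeping along the reduction chain from FSTC to the four problems. The single point deserving care is checking that each reduction leaves the host graph a ladder, which it does because each one operates on the same graph as the given FSTC instance; this is what ensures the final conclusions land on ladders and hence on $(k,n)$-ladders.
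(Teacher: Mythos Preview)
Your proposal is correct and follows exactly the approach the paper takes: the paper simply remarks that the NP-completeness of FSTC on ladders from Theorem~\ref{Nladder} ``is propagated to MSTC, QMST, BSTC and QBST,'' and your write-up spells out precisely this propagation using the reductions already recorded in Section~\ref{secIntro}, together with the observation that $L_n$ is the $(4,n)$-ladder so that hardness passes to the $(k,n)$-ladder class.
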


\subsection{Polynomially Solvable Special Cases}
\medskip

 Let us now examine the complexity of the QMST variations that are not covered in Section~\ref{subsItrac}. We show that these remaining problems are easy by proposing a linear time algorithm to solve the AQMST on $(k,n)$-ladders. Let us first introduce some notations.

Recall that $(k,n)$-ladder is a graph which is a sequence of $n$ $k$-cycles such that only consecutive $k$-cycles ($C^k_i$ and $C^k_{i-1}$) intersect, and their intersection is an edge (and its corresponding two vertices). Let us label the edges of $C^k_{i}$ by $e^i_1,e^i_2,\ldots,e^i_k$, where $e^i_1$ is the edge that is also in $C^k_{i-1}$, $e^i_k$ is the edge that is also in $C^k_{i+1}$, and $e^i_{k-1}$ and $e^i_{k-2}$ are adjacent to $e^i_k$. Note that then $e^i_k=e^{i+1}_1$. Furthermore, let $v^i_1$ and $v^i_2$ denote the vertices incident to $e^i_k$. See Figure~\ref{ladNotation} for an illustration.
\begin{figure}[h]
	\centering
	\includegraphics[scale=0.63]{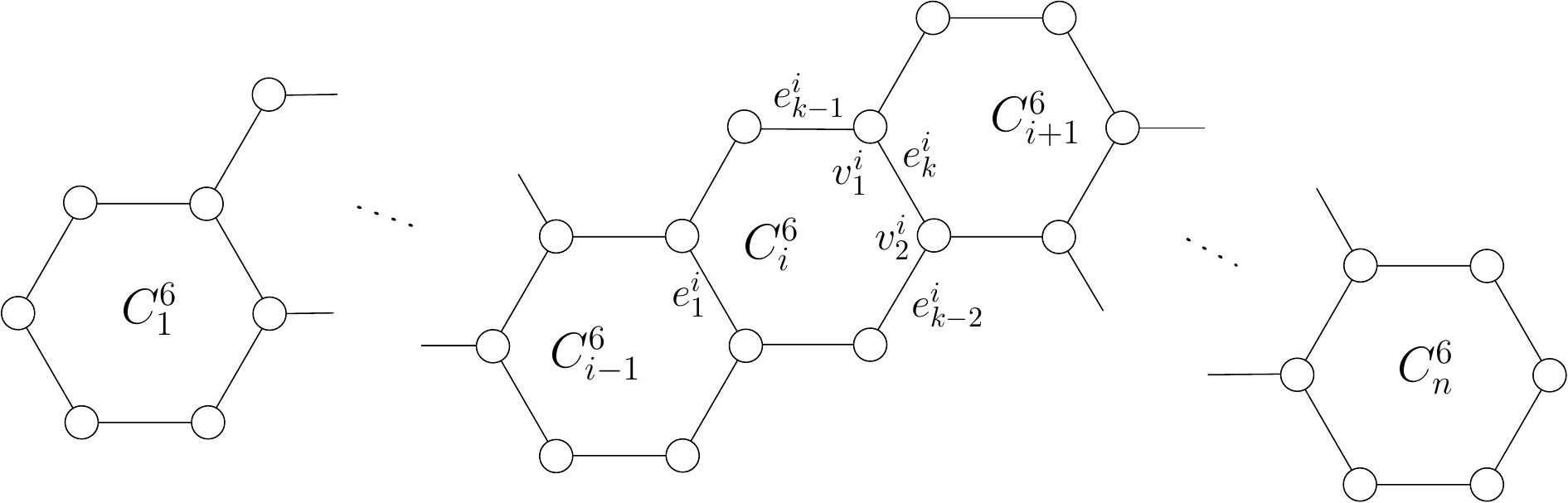}
    	\caption{}
	\label{ladNotation}
\end{figure}

Given a $(k,n)$-ladder $L$, we denote with $L^i$ the $(k,i)$-ladder determined by the first $i$ $k$-cycles of $L$. Next we define $T^i_1$, $T^i_2$, $T^i_3$ and $T^i_4$ to be the spanning trees of $L^i$ with
minimum AQMST cost that satisfy additional properties. In particular, let $T^i_1$ be a
 spanning tree of $L^i$ with minimum AQMST cost that contains $e^i_k$,  $e^i_{k-1}$
and does not contain $e^i_{k-2}$. Analogously, let $T^i_2$ be the minimum spanning tree that
contains $e^i_k$,  $e^i_{k-2}$ and does not contain $e^i_{k-1}$, let $T^i_3$ be the minimum
spanning tree that contains $e^i_{k-1}$,  $e^i_{k-2}$ and does not contain $e^i_{k}$, and let $T^i_4$
be the minimum spanning tree that contains $e^i_k$,  $e^i_{k-1}$ and $e^i_{k-2}$. Note that all
possible configurations of $e^i_k$,  $e^i_{k-1}$, $e^i_{k-2}$ are covered.

Similarly, we define $F^i_1$, $F^i_2$ and $F^i_3$ to be the minimum cost spanning  forests of $L^i$ made of exactly two trees, such that one tree contains $v^i_1$ and the other tree contains $v^i_2$. In particular, let $F^i_1$ be a minimum cost such forest that contains $e^i_{k-1}$ and does not contain $e^i_{k}$, $e^i_{k-2}$, let $F^i_2$ be a minimum cost such forest that contains $e^i_{k-2}$ and does not contain $e^i_{k}$, $e^i_{k-1}$, and let $F^i_3$ be a minimum cost such forest that contains $e^i_{k-1},$ $e^i_{k-2}$ and does not contain $e^i_{k}$. Note that one tree in forests $F^i_1$ and $F^i_2$ are exactly the single vertices $v^i_{1}$ or $v^i_{2}$.

Next we show that for $i\geq2$, if $z(T^{i-1}_j)$, $T^{i-1}_j$, $j=1,\ldots,4$ and $z(F^{i-1}_j)$,
$F^{i-1}_j$, $j=1,2,3$ are known, then $z(T^{i}_j)$, $T^{i}_j$, $j=1,\ldots,4$
and $z(F^{i}_j)$, $F^{i}_j$, $j=1,2,3$ can be calculate in $O(k)$ time.
For two edge disjoint graphs/edge sets $S_1,$ $S_2$ we define $S_1+S_2$ to be the graph spanned by edges of $S_1$ and $S_2$. Then it is easy to verify that the following recursive relations hold:
\begin{align}	
	z(T^i_1)&=\min_{j=1,\ldots,4} \left\{ z\Big(T^{i-1}_j+ \{ C^k_i-e^i_{1}-e^i_{k-2} \} \Big) \right\}, \label{eqAlign1}\\
	z(T^i_2)&=\min_{j=1,\ldots,4} \left\{ z\Big(T^{i-1}_j+ \{ C^k_i-e^i_{1}-e^i_{k-1} \} \Big) \right\}, \\
	z(T^i_3)&=\min_{j=1,\ldots,4} \left\{ z\Big(T^{i-1}_j+ \{ C^k_i-e^i_{1}-e^i_{k} \} \Big) \right\}, \\
	z(T^i_4)&=\min\Bigg\{ \ \min_{j=1,2,3} \left\{ z\Big(F^{i-1}_j+ \{ C^k_i-e^i_{1} \} \Big) \right\}, \nonumber\\ & \hspace{90pt} \min_{\substack{j=1,\ldots,4\\ \ell=2,\ldots,k-3}}\left\{z\Big(T^{i-1}_j+ \{ C^k_i-e^i_{1}-e^i_{\ell} \} \Big)\right\} \ \Bigg\}, \label{eqAlign2}\\
	z(F^i_1)&=\min_{j=1,\ldots,4} \left\{ z\Big(T^{i-1}_j+ \{ C^k_i-e^i_{1}-e^i_{k}-e^i_{k-2} \} \Big) \right\}, \\
	z(F^i_2)&=\min_{j=1,\ldots,4} \left\{ z\Big(T^{i-1}_j+ \{ C^k_i-e^i_{1}-e^i_{k}-e^i_{k-1} \} \Big) \right\}, \\
	z(F^i_3)&=\min\Bigg\{\  \min_{j=1,2,3} \left\{ z\Big(F^{i-1}_j+ \{ C^k_i-e^i_{1}-e^i_{k} \} \Big) \right\},\nonumber\\ & \hspace{90pt} \min_{\substack{j=1,\ldots,4\\ \ell=2,\ldots,k-3}}\left\{z\Big(T^{i-1}_j+ \{ C^k_i-e^i_{1}-e^i_{k}-e^i_{\ell} \} \Big)\right\} \ \Bigg\}. \label{eqAlign3}
\end{align}

Since adjacencies between edges of $C^k_i$ and graphs $T^{i-1}_j$, $F^{i-1}_j$ are known, the minimization functions above can be calculated easily. Function values in \eqref{eqAlign2} and \eqref{eqAlign3} can be calculated in $O(k)$ time and the remaining values in constant time, provided $z(T_j^{i-1})$'s and $z(F^{i-1}_j)$'s are known.

$T^1_j$, $j=1,\ldots,4$ and $F^1_j$, $j=1,2,3$ are easily calculated, and for $i\geq 2$, $T^i_j$'s and $F^i_j$'s and their costs can incrementally be calculated from $T^{i-1}_j$'s and  $F^{i-1}_j$'s along with their costs. When the value of $i$ increases to $n$, the optimal solution of the AQMST on $L$ is obtained. We call this algorithm the \textsl{AQMST-$(k,n)$-ladder algorithm} and is summarized as Algorithm~\ref{alg:ladder}.

\begin{algorithm}[htb]
\caption{AQMST-$(k,n)$-ladder}
\begin{algorithmic}[1]
	\STATE {\bf Input:} A $(k,n)$-ladder graph $L$ with costs $c_e$ for $e\in
E$ and $q(e,f)$ for $e,f\in E, e\neq f$\;
	\STATE Calculate $z(T^1_j)$, $j=1,\ldots,4,$ and $z(F^1_j)$, $j=1,2,3$\;
	\FOR{$i=2$ \TO $n$ }
		\STATE{Determine $T^i_j$, $j=1,\ldots,4,$ and $F^i_j$, $j=1,2,3,$ using  \eqref{eqAlign1}-\eqref{eqAlign3}\;}
	\ENDFOR
	\STATE $T^n$ is the minimum cost tree from $\{T^n_1,\ T^n_2,\ T^n_3,\ T^n_4\}$\;
	\STATE Output $T^n$ and $z(T^n)$;
\end{algorithmic}
\label{alg:ladder}
\end{algorithm}

\begin{theorem}\label{al}
The AQMST-$(k,n)$-ladder algorithm solves AQMST on $(k,n)$-ladders in $O(kn)$ time.
\end{theorem}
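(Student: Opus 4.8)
The plan is to establish correctness first and then bound the running time, since the time bound is essentially immediate once the recursive relations are justified. The heart of the argument is to prove that the quantities $z(T^i_j)$ and $z(F^i_j)$ computed by the recursions \eqref{eqAlign1}--\eqref{eqAlign3} genuinely equal the minimum AQMST costs of the constrained spanning trees and two-component forests they are defined to represent. I would proceed by induction on $i$. The base case $i=1$ is a finite enumeration over the single cycle $C^k_1$, so it is handled directly. For the inductive step, I would fix $i\geq 2$ and assume the seven values for $L^{i-1}$ are correct.

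The key structural observation driving the induction is that the cycle $C^k_i$ attaches to $L^{i-1}$ only along the shared edge $e^i_1=e^{i-1}_k$, and that edge's two endpoints $v^{i-1}_1,v^{i-1}_2$ are the only vertices $L^i$ shares with the newly added $k-2$ vertices of $C^k_i$. Hence any spanning tree $T$ of $L^i$ decomposes: its restriction to $L^{i-1}$ is either a spanning tree of $L^{i-1}$ (when $e^i_1\in T$, forcing a connected restriction) or a two-component spanning forest separating $v^{i-1}_1$ from $v^{i-1}_2$ (when $e^i_1\notin T$, since $T$ must still connect the two sides through $C^k_i$). This dichotomy is exactly why both the $T$-quantities and the $F$-quantities are maintained. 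I would show that each case in the right-hand sides of \eqref{eqAlign1}--\eqref{eqAlign3} corresponds to choosing which edges of $C^k_i$ to include so that the combined object is acyclic, spanning (or forest-separating), and satisfies the prescribed membership constraints on $e^i_{k-2},e^i_{k-1},e^i_k$. Crucially, because the objective is \emph{adjacent-only} quadratic, the interaction cost between the $L^{i-1}$ part and the $C^k_i$ part involves only edges of $C^k_i$ that are adjacent to $e^i_1$; these adjacencies are determined solely by which of $T^{i-1}_j$ or $F^{i-1}_j$ is used, not by the internal structure of that subsolution. This locality is what makes the recursion valid: it guarantees $z(\text{combined})=z(\text{subsolution})+(\text{cost contributed within and adjacent to }C^k_i)$ with the cross terms depending only on the index $j$, so minimizing over $j$ is legitimate. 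I expect this locality/separability argument to be the main obstacle, as one must verify carefully that no quadratic term is double-counted or omitted across the seam $e^i_1$.

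Once correctness is established, the running time follows quickly. Computing the base values for $L^1$ takes $O(k)$ time. Each iteration of the loop evaluates the seven recursions; by the remark following \eqref{eqAlign3}, the relations \eqref{eqAlign2} and \eqref{eqAlign3} each require an $O(k)$ minimization (over the $O(k)$ choices of the omitted path edge $e^i_\ell$), while the remaining five require only constant work given the stored costs. Thus one iteration costs $O(k)$, and with $n-1$ iterations the total is $O(kn)$, plus the final $O(1)$ minimum over $\{T^n_1,T^n_2,T^n_3,T^n_4\}$. Since every spanning tree of $L^n=L$ must contain some feasible configuration of $e^n_{k-2},e^n_{k-1},e^n_k$, taking the best of the four $T^n_j$ yields the optimal AQMST value, completing the proof.
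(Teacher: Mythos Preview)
Your proposal is correct and follows the same approach as the paper: correctness via the case analysis underlying the recursions \eqref{eqAlign1}--\eqref{eqAlign3}, and the $O(kn)$ bound from $n$ iterations each costing $O(k)$. Your write-up is in fact considerably more detailed than the paper's own proof, which simply asserts that correctness ``follows from the exploration of all the possible cases''; one minor imprecision is that the dichotomy on the restriction $T\cap E(L^{i-1})$ is governed by whether that restriction is a spanning tree or a two-component forest, not literally by whether $e^i_1\in T$ (the case $e^i_1\notin T$ with the restriction still a tree is exactly $T^{i-1}_3$), but this does not affect the argument.
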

\begin{proof}
The correctness of the AQMST-$(k,n)$-ladder algorithm follows from the exploration of all the possible cases resulting in recursion relations \eqref{eqAlign1}-\eqref{eqAlign3}.
There are $O(n)$ iterations of line 4 of the algorithm, and each iteration takes $O(k)$ time to calculate the trees and corresponding costs as discussed above. Hence,  the overall complexity is $O(kn)$.
\end{proof}

The MSTAC can be easily reduced to the AQMST, and by calculating the maximum rather than the summation, Algorithm~\ref{alg:ladder} can be adapted to solve the bottleneck versions of the problems on $(k,n)$-ladders. So without describing the detailed steps, the following corollary holds.

\begin{corollary}
MSTAC, AQBST and BSTAC on $(k,n)$-ladders can be solved in $O(kn)$ time.
\end{corollary}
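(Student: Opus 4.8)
The plan is to establish all three results by reducing them to, or adapting, the AQMST-$(k,n)$-ladder algorithm of Theorem~\ref{al}. First I would dispose of MSTAC. Mirroring the FSTC-to-QMST reduction given in Section~\ref{secIntro}, I would turn an MSTAC instance with adjacent conflict set $S$ into an AQMST instance on the same $(k,n)$-ladder by keeping the linear costs $c_e$ and setting $q(e,f)=M$ whenever $\{e,f\}\in S$ and $q(e,f)=0$ otherwise, where $M$ exceeds the total of all linear costs. Because every conflict pair is adjacent, the resulting matrix $Q$ is adjacent-only, so Theorem~\ref{al} applies; any optimal AQMST tree then avoids every conflict pair precisely when a conflict-free spanning tree exists, and among those it has minimum linear cost. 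This yields an $O(kn)$ algorithm for MSTAC.

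For the two bottleneck problems I would keep the dynamic program of Algorithm~\ref{alg:ladder} but reinterpret the objective $z$ as the bottleneck value $\max\{q(e,f):e,f\in T\}$ rather than the sum. The states $T^i_j$ ($j=1,\dots,4$) and $F^i_j$ ($j=1,2,3$) are redefined to be the partial solutions of minimum bottleneck value in each boundary configuration, and in the combination operator $S_1+S_2$ I would replace the additive cost by $\max\{z(S_1),z(S_2),\,\max\{q(e,f):e\in S_1,\ f\in S_2\}\}$. The key structural fact carried over from the min-sum derivation is locality: since $q$ is adjacent-only, the only edges of $L^{i-1}$ that interact with the newly added edges of $C^k_i$ are $e^{i-1}_{k-1}$ and $e^{i-1}_{k-2}$, whose presence is exactly recorded by the configuration index $j$. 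Hence the cross-interaction term in the combination depends only on $j$ and on the chosen edges of $C^k_i$, so recursions \eqref{eqAlign1}--\eqref{eqAlign3} retain their form with $\max$ in place of $+$, and each iteration still costs $O(k)$.

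The step I expect to require the most care is justifying that this bottleneck dynamic program is correct, i.e.\ that Bellman's optimality principle still holds. In the min-sum case one keeps, per configuration, the partial solution of least cost; for the bottleneck objective the corresponding claim is that keeping, per configuration, the partial solution of least \emph{bottleneck} value suffices. This follows from monotonicity of $\max$: if two partial solutions of $L^{i-1}$ share a boundary configuration and one has the smaller bottleneck value, then combining it with any fixed completion yields a combined bottleneck value no larger than that obtained from the other. Thus the redefined states $T^i_j$, $F^i_j$ are well-posed and the recursion is valid, giving AQBST in $O(kn)$ time.

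Finally, BSTAC reduces to AQBST analogously to the MSTAC reduction: set $q(e,f)=\max\{c_e,c_f\}$ for adjacent non-conflict pairs, $q(e,f)=M$ for the (adjacent) conflict pairs, and $q(e,f)=0$ otherwise. Since $k\geq 3$, every edge of a spanning tree of a $(k,n)$-ladder has an incident tree edge, so $\max_{e,f\in T}q(e,f)$ equals $\max_{e\in T}c_e$ unless a conflict pair is selected, in which case it is at least $M$. Minimizing the AQBST objective therefore returns a conflict-free bottleneck-optimal tree whenever one exists, and the $O(kn)$ bound from the adapted algorithm carries over to BSTAC.
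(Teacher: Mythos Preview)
Your proposal is correct and follows essentially the same approach as the paper: reduce MSTAC to AQMST via large penalty costs on conflict pairs, and handle the bottleneck variants by replacing summation with maximum in the recursions of Algorithm~\ref{alg:ladder}. The paper's own justification is a single sentence (``The MSTAC can be easily reduced to the AQMST, and by calculating the maximum rather than the summation, Algorithm~\ref{alg:ladder} can be adapted to solve the bottleneck versions''), so your write-up merely fills in the details the paper omits; the only cosmetic difference is that you treat BSTAC by reducing it to AQBST, whereas the paper implicitly adapts the dynamic program directly---both are valid and equivalent in spirit.
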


We have determined complexities of all problem variations on all graph classes investigated. The results are summarized in Table~\ref{table}, in which ``$\times$" represents NP-hardness and ``$\checkmark$" means polynomially solvable.

\begin{table}[ h ]\footnotesize
	\centering
	\begin{tabular}{|c||c|c|c|c|c|c|c|c|}
	\hline
 		&MSTAC&MSTC&AQMST&QMST&BSTAC&BSTC&AQBST&QBST\\\hline
	\hline
	fan-star& $\times$ &$\times$ &$\times$&$\times$&$\times$&$\times$&$\times$&$\times$\\
	fan&  $\times$ &$\times$ &$\times$&$\times$&$\times$&$\times$&$\times$&$\times$\\
	wheel& $\times$ &$\times$ &$\times$&$\times$&$\times$&$\times$&$\times$&$\times$\\
	ladder& $\checkmark$ &$\times$ &$\checkmark$ &$\times$&$\checkmark$ &$\times$ &$\checkmark$ &$\times$\\
	$(k,n)$-ladder& $\checkmark$ &$\times$ &$\checkmark$ &$\times$&$\checkmark$ &$\times$ &$\checkmark$ &$\times$\\
	$(k,n)$-accordion& $\times$ &$\times$ &$\times$&$\times$&$\times$&$\times$&$\times$&$\times$\\\hline
\end{tabular}
	\caption{The polynomial solvability of QMST variations}
	\label{table}
	\vspace{-8pt}
\end{table}

\section{The QMST with row graded cost matrix}\label{secGraded}

In Section~\ref{secComplexity} it is shown that the QMST and its variations are mostly NP-hard even when restricted to very simple classes of graphs on which a wide variety of hard optimization problems can be solved efficiently. Therefore, we shift the focus from special graphs to specially structured cost matrices.
\smallskip

For each $e_i\in E(G)$ consider the  minimum spanning tree problem MST($i,Q$):
\begin{tabbing}
\hspace{3cm}
\=xxxx\=xxxxx\=xxxxx\=xxxxx\=xxxxx\=xxxxx\=xxxx\=xxxx\=xxx\kill
MST$(i,Q)$:\> Minimize $\sum\limits_{e_j\in E(T)}q(i,j)$\\
\> Subject to\\
\>\>\> $T\in \mathcal{F}$,
\end{tabbing}
where $\mathcal{F}$ is the family of all spanning trees of $G$. Let $z^i$ be the optimal objective function value of MST($i,Q$), and consider the minimum spanning tree problem:
\begin{tabbing}
\hspace{5cm}
\=xxxx\=xxxxx\=xxxxx\kill
MST($Q$):\> Minimize $\sum\limits_{e_j\in E(T)}z^j$\\
\> Subject to\\
\>\>\> $T\in \mathcal{F}$.
\end{tabbing}

Let $\tilde{L}$ be the optimal objective function value of the MST($Q$).
It is shown in \cite{Assad} that $\tilde{L}$ is a lower bound for the optimal objective function value of QMST. We call $\tilde{L}$ \textsl{the natural lower bound} for the QMST. If we could find a spanning tree of $G$ with objective function value $\tilde{L}$, it is surely an optimal solution of the QMST.

An $m\times m$ matrix $Q=(q(i,j))$ is said to be \textsl{row graded}, if $q(i,1)\leq q(i,2)\leq \cdots \leq q(i,m)$
for all $i=1,\ldots,m$. Furthermore, $Q$ is called \textsl{doubly graded} if both $Q$ and $Q^T$ are row graded. 
Given an $m\times m$ matrix $Q$ and a permutation $\pi$ on $\{1,2,\ldots,m\}$, we define $\pi(Q)$ to be the $m\times m$ matrix which $(i,j)$-th entry is $q(\pi^{-1}(i),\pi^{-1}(j))$. We say that $Q$ is \textsl{permuted row graded} or \textsl{permuted doubly graded} if there exist a permutation $\pi$ such that $\pi(Q)$ is row graded or doubly graded, respectively. Note that permuted row graded and permuted doubly graded matrices are recognizable in polynomial time.

Let $G$ be a graph with $n$ vertices and edge set $E(G)=\{e_1,e_2,\ldots,e_m\}$.
Given a permutation on $\{1,2,\ldots,m\}$ $\pi$, a spanning tree $T=\{e_{i_1},e_{i_2},\ldots, e_{i_{n-1}}\}$ of $G$ is called \textsl{the $\pi$-critical spanning tree} if the set $\{\pi(i_1),\pi(i_2),\ldots,\pi(i_{n-1})\}$ is lexicographically smallest among all spanning trees of $G$. 
Recall that any MST can be solved by a greedy algorithm, therefore the $\pi$-critical spanning tree is optimal if permuting the cost vector with $\pi$ makes it nondecreasing.

\begin{lemma}\label{nnl}
Let $Q$ be a cost matrix of the QMST on a graph $G$. If $Q$ is permuted row graded such that $\pi(Q)$ is row graded, and if the $\pi$-critical spanning tree $T^0$ is a minimum spanning tree on $G$ with edge costs $z^i:=\sum_{e_j\in T^0}q(i,j)$ for $i=1,\ldots, m$, then $T^0$ is an optimal solution of the QMST on $Q$.
\end{lemma}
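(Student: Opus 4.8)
The plan is to show that $T^0$ attains the Assad--Xu natural lower bound $\tilde L$, which forces optimality at once. First I would rewrite the QMST objective by folding the diagonal terms into the double sum: since $q(i,i)=c_i$, for any spanning tree $T$ we have $z(T)=\sum_{e_i\in T}\sum_{e_j\in T}q(i,j)=\sum_{e_i\in T} z^i(T)$, where $z^i(T):=\sum_{e_j\in T}q(i,j)$ is the cost of $T$ in the row-$i$ subproblem. Writing $z^i_\ast:=\min_{T} z^i(T)$ for the optimal value of MST$(i,Q)$, we trivially have $z^i(T)\ge z^i_\ast$ for every $T$ and every $i$.

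The key step, and the heart of the argument, is to prove that the single tree $T^0$ is simultaneously optimal for all row subproblems, i.e.\ $z^i(T^0)=z^i_\ast$ for every $i$. This is exactly where the grading hypothesis enters. Because $\pi(Q)$ is row graded, for each fixed row $a$ the edge-weight vector $\big(q(a,1),\dots,q(a,m)\big)$ becomes nondecreasing once its entries are listed in $\pi$-index order; that is, $\pi(e)\le\pi(f)$ implies $q(a,e)\le q(a,f)$, and this holds with the \emph{same} $\pi$ for all rows at once. The $\pi$-critical spanning tree is precisely the greedy (Kruskal) basis obtained by scanning edges in increasing $\pi$-index order, and by the matroid/greedy optimality recalled just before the lemma, such a tree is a minimum spanning tree for any weight vector that is nondecreasing in $\pi$-order. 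Applying this observation to each row $a$ separately yields $z^a(T^0)=z^a_\ast$ for all $a$. In the notation of the lemma this says precisely that the edge costs $z^i=z^i(T^0)$ coincide with the natural-lower-bound weights $z^i_\ast$ used in MST$(Q)$.

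With that identification in hand, the rest is bookkeeping on the Assad--Xu bound. Since $z^i=z^i_\ast$, the minimum spanning tree problem in the hypothesis (edge $e_i$ weighted by $z^i$) is literally MST$(Q)$ (edge $e_i$ weighted by $z^i_\ast$), so assuming $T^0$ is a minimum spanning tree for the weights $z^i$ says exactly that $T^0$ solves MST$(Q)$, giving $\sum_{e_i\in T^0}z^i_\ast=\tilde L$. Evaluating the objective at $T^0$ then yields $z(T^0)=\sum_{e_i\in T^0}z^i(T^0)=\sum_{e_i\in T^0}z^i_\ast=\tilde L$. For an arbitrary spanning tree $T$, the chain $z(T)=\sum_{e_i\in T}z^i(T)\ge\sum_{e_i\in T}z^i_\ast\ge\tilde L$ (the first inequality from $z^i(T)\ge z^i_\ast$, the second because $T$ is feasible for MST$(Q)$) recovers the lower bound of \cite{Assad}, so $z(T)\ge\tilde L=z(T^0)$ and $T^0$ is optimal.

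I expect the main obstacle to be the clean justification of the simultaneous optimality $z^i(T^0)=z^i_\ast$ for all $i$ from one lex-minimal tree: one must verify that the lexicographic minimality defining the $\pi$-critical tree coincides with the greedy matroid basis taken in $\pi$-order, and that a single permutation $\pi$ makes every row's weight vector nondecreasing at the same time (so the same greedy scan is optimal for all rows). Once this is secured, everything downstream is a direct reuse of the natural-lower-bound computation.
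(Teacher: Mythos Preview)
Your proposal is correct and follows essentially the same approach as the paper's proof: row-gradedness of $\pi(Q)$ makes the $\pi$-critical tree $T^0$ optimal for every MST$(i,Q)$, so the $z^i$'s coincide with the natural-lower-bound weights, and the hypothesis then gives $z(T^0)=\tilde L$. The paper compresses all of this into three sentences and simply cites the Assad--Xu lower bound rather than re-deriving the chain $z(T)\ge\sum_{e_i\in T}z^i_\ast\ge\tilde L$, but the logic is identical.
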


\begin{proof}
Since $\pi(Q)$ is row graded, the $\pi$-critical spanning tree $T^0$ is an optimal solution of the MST($i,Q$) for all
$i=1,\ldots,m$ with corresponding optimal objective function value $z^i=\sum_{e_j\in T^0}q(i,j)$. As $T^0$ is also optimal for the MST with edge costs $z^i$, it is optimal for the MST($Q$). Thus $z(T^0)=\tilde{L}$ and the optimality of $T^0$ for the QMST is proved.
\end{proof}

Using Lemma~\ref{nnl} we show that the QMST on permuted doubly graded matrices is polynomially solvable.

\begin{theorem}\label{corM}
If the cost matrix $Q$ of the QMST on a graph $G$ is permuted doubly graded such that $\pi(Q)$ is doubly graded, then the $\pi$-critical spanning tree $T^0$ is an optimal solution.
\end{theorem}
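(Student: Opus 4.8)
The plan is to reduce Theorem~\ref{corM} to Lemma~\ref{nnl}, whose hypotheses are almost met by assumption. Since $Q$ is permuted doubly graded with $\pi(Q)$ doubly graded, in particular $\pi(Q)$ is row graded, so the only gap between the hypothesis of the theorem and that of the lemma is the requirement that the $\pi$-critical spanning tree $T^0$ actually be a minimum spanning tree of $G$ under the derived edge costs $z^i := \sum_{e_j\in T^0} q(i,j)$. Thus the entire content of the proof is to establish this one fact; once it is in hand, Lemma~\ref{nnl} immediately yields optimality of $T^0$ for the QMST.

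First I would observe that because $\pi(Q)$ is row graded, for each fixed $i$ the cost $q(i,j)$ is nondecreasing in $\pi(j)$, so the greedy (Kruskal-style) MST computation for MST$(i,Q)$ — which examines edges in order of increasing $\pi(\cdot)$ value — selects exactly the lexicographically $\pi$-smallest spanning tree, namely $T^0$. Hence $T^0$ is simultaneously optimal for every MST$(i,Q)$, and $z^i=\sum_{e_j\in T^0}q(i,j)$ as defined. This is precisely the first sentence already argued in the proof of Lemma~\ref{nnl} and needs no new idea. The substantive step is to now use the \emph{column} grading, i.e. that $\pi(Q)^T$ is also row graded, to show that the vector $(z^i)$ is itself ordered compatibly with $\pi$, so that running greedy on the costs $z^i$ again returns $T^0$.

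The key computation is to compare $z^i$ and $z^{i'}$ for two edges with $\pi(i)\le \pi(i')$. Writing $z^i - z^{i'} = \sum_{e_j\in T^0}\bigl(q(i,j)-q(i',j)\bigr)$, the hypothesis that $\pi(Q)$ is doubly graded means that for each column $j$ the entry $q(\cdot,j)$ is nondecreasing in $\pi(\cdot)$, so every summand $q(i,j)-q(i',j)$ is $\le 0$ when $\pi(i)\le\pi(i')$. Therefore $z^i\le z^{i'}$, i.e. the derived edge-cost vector $(z^i)$ is nondecreasing in $\pi(\cdot)$. Consequently the greedy algorithm for the MST with costs $z^i$ examines edges in exactly the same $\pi$-order and, breaking ties lexicographically, again returns the $\pi$-critical tree $T^0$. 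This shows $T^0$ is a minimum spanning tree for the costs $z^i$, which is the missing hypothesis of Lemma~\ref{nnl}.

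With that established I would simply invoke Lemma~\ref{nnl}: since $\pi(Q)$ is row graded and $T^0$ is an MST under the costs $z^i$, the lemma gives $z(T^0)=\tilde L$ and hence the optimality of $T^0$ for the QMST. The main obstacle — really the only nonroutine point — is the sign argument in the previous paragraph, namely recognizing that double grading is exactly what forces the derived costs $z^i$ to inherit the $\pi$-ordering of the rows; everything else is bookkeeping about the greedy algorithm and a direct citation of the lemma. A minor subtlety worth a sentence is the tie-breaking: when several edges share a $\pi$-value the greedy procedure must be understood to prefer smaller $\pi$-indices so that ``the'' $\pi$-critical tree is well defined and is the one produced in both greedy runs, but this is consistent with the lexicographic-minimality definition of $T^0$ given before the lemma.
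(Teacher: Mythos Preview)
Your proof is correct and follows essentially the same route as the paper: use the row grading of $\pi(Q)$ to see that $T^0$ solves every MST$(i,Q)$, then use the column grading (row grading of $\pi(Q)^T$) to deduce that the derived costs satisfy $z^{\pi^{-1}(1)}\le\cdots\le z^{\pi^{-1}(m)}$, so $T^0$ is an MST for the $z^i$'s and Lemma~\ref{nnl} applies. One tiny slip: since $\pi$ is a permutation, distinct edges never share a $\pi$-value, so your tie-breaking remark should instead concern ties among the \emph{costs} (e.g.\ equal $z^i$'s), which are indeed resolved by processing edges in $\pi$-order.
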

\begin{proof}
Let $z^i:=\sum_{e_j\in T^0}q(i,j)$ for $i=1,\ldots, m$. Since $\pi(Q)$ is row graded, $T^0$ is optimal for MST($i,Q$), $i=1,\ldots,m,$ with corresponding optimal objective function value $z^i$. Also $\pi(Q)^T$ is
row graded, so for all $j$, $q(\pi^{-1}(1),j)\leq q(\pi^{-1}(2),j)\leq \cdots \leq q(\pi^{-1}(m),j)$. 
Then $\sum_{e_j\in T^0}q(\pi^{-1}(1),j)\leq \sum_{e_j\in T^0}q(\pi^{-1}(2),j)\leq \cdots \leq \sum_{e_j\in T^0}q(\pi^{-1}(m),j)$, i.e.\@ $z^{\pi^{-1}(1)}\leq z^{\pi^{-1}(2)}\leq \cdots \leq z^{\pi^{-1}(m)}$. Thus $T^0$ is a minimum spanning tree on $G$ with edge costs $z^i$. From Lemma~\ref{nnl}, $T^0$ is optimal for the
QMST, with optimal objective function value $z(T^0)=\tilde{L}$.
\end{proof}

In the rest of this section, we extend the above results to a more general structure called matroid bases and give a new characterization of matroids in terms of quadratic objective function. To the best of our knowledge, no characterization of matroids is known that uses an optimization problem with a quadratic objective function. 

Let $E=\{1,2,\ldots, m\}$ be a ground set and $\mathcal{F}$ be a family of subsets of $E$ that we call \textsl{bases}, where $|S|=s$, a constant for any $S\in \mathcal{F}$. Let $I=\{X\subseteq S: S\in \mathcal{F}\}$. We call the structure $(E,I)$ an \textsl{independence system} and the structure $(E,\mathcal{F})$ a \textsl{base system}. An independence system $(E,I)$ is called a \textsl{matroid} if and only if for any
$S_1,S_2\in \mathcal{F}$ and $i\in S_1\setminus S_2$, there exists $j\in S_2\setminus S_1$ such that
$S_1\setminus\{i\}\cup\{j\}\in \mathcal{F}$. 
For instance, when $E$ is the edge set of a graph $G$, $\mathcal{F}$ is the collection of
all the spanning trees of $G$, and $I$ is the collection of all acyclic subgraphs of $G$, then $(E,I)$ is called \textsl{the graphic matroid}.

Given a base system $(E,\mathcal{F})$ 
and a weight $w(i,j)$ for each $i,j\in E$, 
the \textsl{quadratic minimum weight base problem (QMWB)} is formulated as follows:
\begin{tabbing}
\hspace{3cm}
\=xxxx\=xxxxx\=xxxxx\=xxxxx\=xxxxx\=xxxx\=xxxx\kill
\> Minimize $\Pi(S)=\sum\limits_{i\in S}\sum\limits_{j\in S}w(i,j)$\\
\> Subject to\\
\>\>\> $S\in \mathcal{F}$,
\end{tabbing}
where $W=(w(i,j))_{m\times m}$ is the associated cost matrix. 
For each $i\in E$ we define a \textsl{minimum weight  base problem} as follows:
\begin{tabbing}
\hspace{3cm}
\=xxxx\=xxxxx\=xxxxx\=xxxxx\=xxxxx\=xxxx\=xxxx\kill
MWB($i,W$):\> Minimize $\sum\limits_{j\in S}w(i,j)$\\
\> Subject to\\
\>\>\> $S\in \mathcal{F}$.
\end{tabbing}
Let $f^i$ be the optimal objective function value of the MWB($i,W$), and similar to the case for QMST, the optimal objective function value $\bar{L}$ of the problem
\begin{tabbing}
\hspace{3cm}
\=xxxx\=xxxxx\=xxxxx\kill
MWB(Q):\> Minimize $\sum\limits_{j\in S}f^j$\\
\> Subject to\\
\>\>\> $S\in \mathcal{F}$,
\end{tabbing}
is called \textsl{the natural lower bound} of the QMWB with cost matrix $W$.

Given a permutation $\pi$ on $\{1,2,\ldots,m\}$, we say that a base $B^0=\{{i_1},{i_2},\ldots, {i_s}\}$ is \textsl{the $\pi$-critical base} if the set $\{\pi(i_1),\pi(i_2),\ldots,\pi(i_s)\}$ is lexicographically smallest among all sets $\{\pi(j_1),\pi(j_2),\ldots,\pi(j_s)\}$ where $\{{j_1},{j_2},\ldots, {j_s}\}\in \mathcal{F}$.

\begin{theorem}\label{matroid}
	The following statements are equivalent:
	\begin{enumerate}[(i)]
		\item $(E,I)$ is a matroid.
		\item Let $W$ be a cost matrix of the QMWB on a base system $(E,\mathcal{F})$. If $W$ is permuted row graded such that $\pi(W)$ is row graded, and if the $\pi$-critical base $S^0$ is a minimum weight base for costs $f^i:=\sum_{j\in S^0}w(i,j)$ for $i=1,\ldots, |E|$, then $S^0$ is an optimal solution of the QMWB with cost matrix $W$.
		\item If the cost matrix $W$ of the QMWB problem on a base system $(E,\mathcal{F})$ is permuted doubly graded such that $\pi(W)$ is doubly graded, then the $\pi$-critical base $S^0$ is an optimal solution. Moreover, the natural lower bound is the optimal objective function value.
	\end{enumerate}
\end{theorem}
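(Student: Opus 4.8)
The plan is to prove the three-way equivalence through the directions (i)$\Rightarrow$(ii) and (i)$\Rightarrow$(iii), together with a single converse construction that refutes both (ii) and (iii) whenever (i) fails; these four implications give (i)$\Leftrightarrow$(ii) and (i)$\Leftrightarrow$(iii) and hence the full equivalence. The two forward implications are essentially the matroid generalizations of Lemma~\ref{nnl} and Theorem~\ref{corM}, and the only genuinely new work lies in the converse.

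For (i)$\Rightarrow$(ii) I would copy the proof of Lemma~\ref{nnl}, replacing the greedy minimum spanning tree routine by the classical matroid greedy algorithm. The fact I rely on is that, in a matroid, the $\pi$-critical base $S^0$ (obtained by scanning the ground set in the order $\pi^{-1}(1),\pi^{-1}(2),\dots$ and keeping each element that preserves independence) is a minimum weight base for \emph{every} weight vector that is nondecreasing in the $\pi$-order. Since $\pi(W)$ is row graded, each row $w(i,\cdot)$ is nondecreasing in $\pi$-order, so $S^0$ solves MWB$(i,W)$ and $f^i=\sum_{j\in S^0}w(i,j)$ equals the true optimal value of MWB$(i,W)$. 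For every base $S$ one has $\Pi(S)\ge\sum_{i\in S}f^i\ge\bar L$ (the natural lower bound), whereas the hypothesis that $S^0$ is a minimum weight base for the weights $f^i$ forces $\Pi(S^0)=\sum_{i\in S^0}f^i=\bar L$, proving optimality.

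For (i)$\Rightarrow$(iii) I would argue as in Theorem~\ref{corM}. Row gradedness of $\pi(W)$ again gives that $S^0$ solves each MWB$(i,W)$ with value $f^i=\sum_{j\in S^0}w(i,j)$, while row gradedness of $\pi(W)^T$ gives $w(\pi^{-1}(1),j)\le w(\pi^{-1}(2),j)\le\cdots$ for every $j$; summing over $j\in S^0$ yields $f^{\pi^{-1}(1)}\le f^{\pi^{-1}(2)}\le\cdots$, i.e.\ the $f^i$ are nondecreasing in $\pi$-order. By the same matroid greedy fact, $S^0$ is then automatically a minimum weight base for the $f^i$, so the hypothesis of (ii) is met, $\Pi(S^0)=\bar L$, and the natural lower bound coincides with the optimum, which is the ``moreover'' claim.

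The main obstacle is the converse, which I would prove by contrapositive via a weight construction that reduces to the classical Rado--Edmonds phenomenon. Assume $(E,I)$ is not a matroid; since all bases are equicardinal, the defining exchange condition must fail, so there are $S_1,S_2\in\mathcal{F}$ and $p\in S_1\setminus S_2$ with $S_1-p+q\notin\mathcal{F}$ for every $q\in S_2\setminus S_1$. Writing $X=S_1-p$ and $Y=S_2$, a size count shows $X+e\notin I$ for all $e\in Y\setminus X$ (a size-$s$ independent set would be a base), with $|X|=|Y|-1$. I then take the \emph{column-constant} cost matrix $w(i,j):=a_j$, where $a_e=1$ on $X$, $a_e=2$ on $Y\setminus X$, and $a_e=M$ with $M>2s$ elsewhere. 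After sorting by a permutation $\pi$ arranging the $a_e$ nondecreasingly, $W$ is doubly graded (its columns are constant, hence weakly nondecreasing), and $\Pi(S)=s\sum_{j\in S}a_j$, so QMWB-optimality coincides with minimum linear weight. Any greedy/$\pi$-critical base must take all of $X$, can absorb no element of $Y\setminus X$, and completes with a weight-$M$ element, giving weight $s-1+M$, whereas $Y$ has weight at most $2s<s-1+M$; hence $S^0$ is not QMWB-optimal, refuting (iii). Finally, because $w(i,j)$ is independent of $i$, the values $f^i=\sum_{j\in S^0}a_j$ are all equal, so every base is a minimum weight base for the $f^i$ and the extra hypothesis of (ii) holds, yet $S^0$ is not optimal, so the same instance refutes (ii). The delicate points I would treat carefully are that \emph{every} greedy completion of $X$ is forced onto a weight-$M$ element (so no tie-breaking rule can rescue optimality) and the elementary bookkeeping that fixes $M$ so as to make $Y$ strictly cheaper.
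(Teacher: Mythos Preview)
Your forward implications (i)$\Rightarrow$(ii) and (i)$\Rightarrow$(iii) match the paper's argument exactly; both simply replay Lemma~\ref{nnl} and Theorem~\ref{corM} with the matroid greedy algorithm in place of the spanning-tree greedy.

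Your converse, however, takes a genuinely different route from the paper and is arguably cleaner. The paper builds a specific $0$--$1$ cost matrix (not of rank one) whose $\pi$-permutation is doubly graded, arranged so that $\Pi(S_1)=1$ while $\Pi(S_2)=0$; it then separately verifies that the induced weights $f^i$ are nondecreasing in $\pi$-order and that $S_1$ is a minimum $f$-weight base, thereby refuting both (iii) and (ii) with the same instance. Your column-constant matrix $w(i,j)=a_j$ instead collapses the QMWB objective to $s\cdot\sum_{j\in S}a_j$, a scaled linear problem, so the failure of the $\pi$-critical base to be optimal is exactly the classical Rado--Edmonds phenomenon for non-matroids. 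This makes the logical content transparent---the quadratic characterisation is the linear one restricted to rank-one cost matrices---and the extra hypothesis of (ii) comes for free because all $f^i$ are equal. The paper's construction, by contrast, produces a counterexample that is not a disguised linear instance, so it shows slightly more (failure already occurs among genuinely two-dimensional cost matrices), at the price of more bookkeeping in checking double-gradedness and the $f$-minimality of $S_1$. Both arguments are correct; yours is the more economical, the paper's the more intrinsic to the quadratic setting.
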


\begin{proof}
Using the fact that a minimum weight matroid can be found by the greedy algorithm, statements $(i)$ implies $(ii)$ and $(i)$ implies $(iii)$ can be proved similarly as Lemma~\ref{nnl} and Theorem~\ref{corM}. Hence their proofs are omitted.

To show $(iii)$ implies $(i)$, we assume $(E,I)$ is not a matroid and we aim to show that in that case $(iii)$ is not true. Hence, we assume that there are
$S_1\in \mathcal{F},\ S_2\in \mathcal{F}$ and $K\in S_1\setminus S_2$ such that $S_1\setminus \{K\}\cup \{i\}\notin \mathcal{F}$ for all $i\in S_2\setminus S_1$. Let $\pi$ be a permutation on $\{1,2,\ldots,m\}$ for which $\{\pi(i)\colon i\in S_1\setminus \{K\}\}=\{1,2,\ldots,s-1\}$,  $\{\pi(i)\colon i\in S_2\setminus S_1\}=\{s,s+1,\ldots,|S_1\cup S_2|-1\}$, $\pi(K)=|S_1\cup S_2|$ and $\{\pi(i)\colon i\in E\setminus(S_1\cup S_2)\}=\{|S_1\cup S_2|+1,\ldots,m\}$. Note that $S_1$ is the $\pi$-critical base, since $S_1\setminus \{K\}\cup \{i\}\notin \mathcal{F}$ for all $i\in S_2\setminus S_1$. 
Let a cost matrix $W$ be such that its entries are
\begin{equation*}
	w(i,j)=\begin{cases}
		1 & \mbox{ if } i \mbox{ or } j \mbox{ is in } E\setminus (S_1 \cup S_2),\\
		1 & \mbox{ if } i\in (S_2\setminus S_1)\cup\{K\} \mbox{ and } j=K,\\
		0 & \mbox{ otherwise.}
	\end{cases}
\end{equation*}
Clearly $\pi(W)$ is doubly graded, and hence $W$ is permuted doubly graded, see Figure~\ref{W}.
\begin{figure}[h]
	\vspace{-8pt}
	\begin{align*}
		& \hspace{10pt} \pi(S_1\setminus\{K\}) \hspace{8pt} \pi(S_2\setminus S_1) \hspace{6pt} \pi(K) \hspace{5pt} \pi(E\setminus (S_1 \cup S_2))\\
	\pi(W)=&\left(\begin{array}{ccc|ccc|c|ccc}
		 &  & & &  & &  &  &  & \\
		\ &\ \ 0 \ &\ \ & \ \ \ & 0 &\ \  \ &\hspace{5pt}0\hspace{5pt} & \ \ \ \ \ \ & 1 &\ \ \ \ \ \\
		&  & & &  & &  &  &  & \\
		\hline  &  & & &  & &  &  &  & \\
		\ & 0 & \ & \ & 0 & \ & 1 & \ & 1 & \ \\
		&  & & &  & &  &  &  & \\
		\hline &  & & &  & &  &  &  & \vspace{-5pt}\\
		& 0 &  & & 0 &  & 1 &  & 1 & \\[4pt]
		\hline  &  & & &  & &  &  &  & \\
		 & 1 &  &  & 1 &  & 1 &  & 1 & \ \\
		&  & & &  & &  &  &  & \\
	\end{array}\right)\begin{array}{l}
		 \\
		\pi(S_1\setminus\{K\}) \\
		\\
		\\ 
		\pi(S_2\setminus S_1) \\
		\vspace{5pt}\\
		\pi(K) \\
		\\
		\pi(E\setminus (S_1 \cup S_2)) \\
		\\
	\end{array}
	\end{align*}
	\vspace{-8pt}\caption{}
	\label{W}
\end{figure}
Now let us consider the objective function values of $S_1$ and $S_2$ in QMWB with cost matrix $W$. $\Pi(S_1)=1$ and $\Pi(S_2)=0$.
Hence, the $\pi$-critical base $S_1$ is not an optimal solution of the QMWB on cost matrix $W$, which implies that $(iii)$ is not true.

Similarly, we prove that $(ii)$ implies $(i)$ by showing that if $(E,I)$ is not a matroid, then $(ii)$ is not true. That will complete the proof of the theorem. So, assume that there are
$S_1\in \mathcal{F},\ S_2\in \mathcal{F}$ and $K\in S_1\setminus S_2$ such that $S_1\setminus \{K\}\cup \{i\}\notin \mathcal{F}$ for all $i\in S_2\setminus S_1$. Then again consider permutation $\pi$ and the cost matrix $\pi(W)$ from Figure~\ref{W}. $\pi(W)$ is row graded and the $\pi$-critical base $S_1$ is a minimum weight base on the costs $f^i=\sum_{j\in S_1}w(i,j)$, but $S_1$ is not an optimal solution of the corresponding QMWB, $S_2$ is.
\end{proof}

We end this section by noting that Theorem~\ref{corM} and Theorem~\ref{matroid} hold true also for the QBST and \textsl{quadratic bottleneck base problem}, respectively. That is, when the sum in the objective value function is replaced by the maximum. The same proofs work, since the greedy algorithm obtains an optimal solution also for the linear bottleneck objective functions on a base system of a matroid \cite{GP90}.


\section*{Acknowledgments}

This work was supported by an NSERC discovery grant and an NSERC discovery accelerator supplement awarded to Abraham P. Punnen.


\begin{thebibliography}{13}

\bibitem{m2} R.K. Ahuja, O. Ergun, J.B. Orlin and  A.P. Punnen, A survey of very large-scale neighborhood search techniques, {\it Discrete Applied Mathematics} 123 (2002), 75--102.

\bibitem{Assad} A. Assad and W. Xu, The quadratic minimum spanning tree problem,
{\it Naval Research Logistics} 39 (1992), 399--417.

\bibitem{x5} C. Buchheim  and L. Klein,  Combinatorial optimization with one quadratic term: Spanning trees and forests, {\it Discrete Applied Mathematics} 177 (2014), 34--52.

\bibitem{x2} R. Cordone and G. Passeri, Solving the quadratic minimum spanning tree problem, {\it Applied Mathematics and Computation} 218 (2012), 11597--11612.

\bibitem{c1} A. \'Custi\'c and A.P. Punnen, Characterization of the linearizable instances of the quadratic minimum spanning tree problem, {\tt arXiv:1510.02197}, 2015.

\bibitem{Darmann} A. Darmann, U. Pferschy and J. Schauer, Minimal spanning trees with conflict graphs,
{\it Optimization online}, 2009.
\url{http://www.optimization-online.org/DB_FILE/2009/01/2188.pdf}

\bibitem{da} A. Darmann, U. Pferschy, S. Schauer, and G.J. Woeginger, Paths, trees and matchings under disjunctive constraints,
{\it Discrete Applied Mathematics} 159 (2011), 1726--1735.

\bibitem{x4} A. Fischer and F. Fischer, Complete description for the spanning tree problem with one linearised quadratic term, {\it Operations Research Letters} 41 (2013), 701--705.

\bibitem{x8} Z.-H. Fu and J.-K. Hao, A three-phase search approach for the quadratic minimum spanning tree problem, {\it Engineering Applications of Artificial Intelligence} 46 (2015), 113--130.

\bibitem{Gao} J. Gao. and M. Lu, Fuzzy quadratic minimum spanning tree problem,
{\it Applied Mathematics and Computation} 164 (2005), 773--788.

\bibitem{GP90} S.K. Gupta and A.P. Punnen, $k$-sum optimization problems,
{\it Operations Research Letters} 9 (1990), 121--126.

\bibitem{Hil74} A.J.W. Hilton, Spanning trees and Fibonacci and Lucas numbers, {\it The Fibonacci Quarterly} 12 (1974), 259--262.

\bibitem{x3} M. Lozanoa, F. Glover, C. Garc\' ia-Mart\'inez, F. Javier Rodr\'iguez and R. Mart\'i, Tabu search with strategic oscillation for the quadratic minimum spanning tree, {\it IIE Transactions} 46 (2014), 414--428.

\bibitem{x7} S.M.D.M. Maia, E.F.G. Goldbarg and M.C. Goldbarg,  On the biobjective adjacent only quadratic spanning tree problem, {\it Electronic Notes in Discrete Mathematics} 41 (2013), 535--542.

\bibitem{x11} S.M.D.M. Maia, E.F.G. Goldbarg and M.C. Goldbarg, Evolutionary algorithms for the bi-objective adjacent only quadratic spanning tree, {\it International Journal of Innovative Computing and Applications} 6 (2014), 63--72.

\bibitem{m1} S. Mitrovic-Minic and A.P. Punnen, Local search intensified: Very large-scale variable neighborhood search for the multi-resource generalized assignment problem, {\it Discrete Optimization} 6 (2009), 370--377.

\bibitem{Oncan} T. \"Oncan and A.P. Punnen, The quadratic minimum spanning tree problem: A lower bounding procedure and an efficient search algorithm, {\it Computers $\&$ Operations Research}
37 (2010), 1762--1773.

\bibitem{x1} G. Palubeckis, D. Rubliauskas and A. Targamadz\.{e}, Metaheuristic approaches for the quadratic minimum spanning tree problem, {\it Information Technology and Control} 29 (2010), 257--268.

\bibitem{x6} D.L. Pereira, M. Gendreau and  A.S. da Cunha, Stronger lower bounds for the quadratic minimum spanning tree problem with adjacency costs, {\it Electronic Notes in Discrete Mathematics} 41 (2013), 229--236.

\bibitem{x9} D.L. Pereira, M. Gendreau and A.S. da Cunha, Branch-and-cut and Branch-and-cut-and-price algorithms for the adjacent only quadratic minimum spanning tree problem, {\it Networks} 64 (2015), 367--379.

\bibitem{xx9} D.L. Pereira, M. Gendreau and A.S. da Cunha, Lower bounds and exact algorithms for the quadratic minimum spanning tree problem, {\it Computers and Operations Research} 63 (2015), 149--160.

\bibitem{Punnen} A.P. Punnen and R. Zhang,
Quadratic bottleneck problems, {\it Naval Research Logistics}  58 (2011), 153--164.

\bibitem{Sed69} J. Sedl\'a\v{c}ek, On the number of spanning trees of finite graphs,
{\it \v{C}asopis pro P\v{e}stov\'an\'i  Matematiky} 94 (1969), 217--221.

\bibitem{Soak} S.-M. Soak, D.W. Corne and B.-H. Ahn, A new evolutionary algorithm for spanning tree based communication network design, {\it IEICE Transaction on Communication} E88-B (2005), 4090--4093.

\bibitem{Soak1} S.-M. Soak, D.W. Corne and B.-H. Ahn, The edge-window-decoder representation for tree-based problems, {\it IEEE transactions on Evolutionary Computation} 10 (2006), 124--144.

\bibitem{Zhang} R. Zhang and A.P. Punnen, Quadratic bottleneck knapsack problems, {\it Journal of Heuristics} 19 (2013), 573--589.

\bibitem{Zh} R. Zhang, S. Kabadi and A.P. Punnen, The minimum spanning tree problem with conflict constraints and its variations, {\it Discrete Optimization} 8 (2011), 191--205.

\bibitem{Zhou} G. Zhou and M. Gen, An effective genetic algorithm approach to the quadratic minimum spanning tree problem, {\it Computers $\&$ Operations Research} 25 (1998), 229--237.

\end{thebibliography}
\end{document}